\DeclareFontFamily{T1}{calligra}{}
\DeclareFontShape{T1}{calligra}{m}{n}{<->s*[1.1]callig15}{}
\DeclareMathAlphabet\mathcalligra   {T1}{calligra} {m} {n}
\xpatchcmd{\algorithmic}{\itemsep\z@}{\itemsep=1ex plus2pt}{}{}
\numberwithin{equation}{section}
\theoremstyle{plain}
\newtheorem{theorem}{Theorem}[section]
\newtheorem{corollary}[theorem]{Corollary}
\newtheorem{proposition}[theorem]{Proposition}
\newtheorem{definition}[theorem]{Definition}
\theoremstyle{remark}
\newtheorem{remark}[theorem]{Remark}
\newtheorem{example}[theorem]{Example}
\newcommand{\Xt}{(X_t)_{t\geq0}}
\newcommand{\ii}{{I \setminus \{i\}}}
\newcommand{\norm}[1]{\left\Vert#1\right\Vert}
\newcommand{\RR}{\mathbb{R}}
\newcommand{\QQ}{\mathbb{Q}}
\newcommand{\PP}{\mathbb{P}}
\newcommand{\CC}{\mathbb{C}}
\newcommand{\NN}{\mathbb{N}}
\newcommand{\EE}{\mathbb{E}}
\newcommand{\FF}{\mathbb{F}}
\newcommand{\cC}{\mathcal{C}}
\newcommand{\cF}{\mathcal{F}}
\newcommand{\cI}{\mathcal{I}}
\newcommand{\cM}{\mathcal{M}}
\newcommand{\cU}{\mathcal{U}}
\newcommand{\pd}[2]{\frac{\partial #1}{\partial #2}}
\newcommand{\scal}[2]{\left\langle{#1},{#2}\right\rangle}
\renewcommand{\Re}{\mathrm{Re}\,}
\renewcommand{\Im}{\mathrm{Im}\,}
\newcommand{\G}{\Gamma}
\newcommand{\Om}{\Omega}
\newcommand{\epsi}{\varepsilon}
\newcommand{\define}{\vcentcolon =}
\renewcommand{\ii}{\mathrm{i}\mkern1mu}
\newcommand{\nn}{\emph{NN}}
\newcommand{\Inc}{\textrm{Inc}}
\newcommand{\phitilde}{\widetilde{\phi}}
\newcommand{\psitilde}{\widetilde{\psi}}
\DeclareMathOperator{\spn}{span}
\newcommand{\Esign}[1]{\mathbb{E}\left[ #1 \right]}  % expectation with left and right included
\newcommand{\Econd}[2]{\mathbb{E}\left[\left.#1\right|#2\right]}  %conditional expectation
\begin{document}
\title{Consistent Recalibration Models and Deep Calibration}
\author{Matteo Gambara\footnote{matteo.gambara@math.ethz.ch, ETH
    Zürich, Rämistrasse 101, Zürich.} \,\,\,\&\, Josef
  Teichmann\footnote{josef.teichmann@math.ethz.ch, ETH Zürich,
    Rämistrasse 101, Zürich.}}
\date{5\textsuperscript{th} May 2021}
\maketitle
\begin{abstract}
  Consistent Recalibration (CRC) models have been introduced to
  capture in necessary generality the dynamic features of term
  structures of derivatives' prices. Several approaches have been
  suggested to tackle this problem, but all of them, including CRC
  models, suffered from numerical intractabilities mainly due to the
  presence of complicated drift terms or consistency conditions. We
  overcome this problem by machine learning techniques, which allow to
  store the crucial drift term's information in neural network type
  functions. This yields first time dynamic term structure models
  which can be efficiently simulated.
\end{abstract}
	
\selectlanguage{british}\allowhyphens
	
\tableofcontents
	
\section{Introduction}
	
Term structures of prices exist in many different markets and belong
to the most challenging topics for dynamic modelling in mathematical
finance or econometrics. Reasons for this are two-fold: first, we have
to deal with potentially infinitely many, strongly dependent prices
satisfying mutual relations, and, second, the dynamics of those prices
is subject to absence of arbitrage conditions. Whence it is a delicate
issue to write models which take the full market information as state
variables into account, i.e.~all available prices, and which evolve at
the same time in a way which satisfies all constraints. The problem
has been successfully dealt with in the context of bond prices within
the Heath-Jarrow-Morton framework \cite{hjm92}, but already the term
structure of plain vanilla option prices on one underlying $(S_t)$ has
so far been too challenging to come up with fully satisfying solutions
despite deep theoretical insights. Of course there are more involved
term structures stemming from volatility cubes in treasury or the
combined S\&P -- VIX term structures. We consider our work a first
step in the direction of tractable term structure modelling.
	
Let us be more precise on this point (we assume an environment free of
interest rates): the \emph{classical approach} to modelling term
structures consists in choosing a class of stochastic processes $(S_t)$,
which models the price of the underlying with respect to a pricing or
physical measure, on a stochastic basis, which encodes the information
structure through a filtration. Given market data one model is
selected from the given class via calibration, i.e.~the procedure
guaranteeing that all market prices are reproduced by the model. Then
new products are priced and hedged with the calibrated model. In the
realm of the term structure of option prices on one underlying,
parametric models like SABR, Heston or rough volatility models are
used for this purpose; similarly, we can use non-parametric models like local volatility
models which are stochastic local volatility models. This works in many
respects very well, but suffers from dynamic shortcomings, i.e. newly
arriving information leads via recalibration to a new model choice,
whence an inconsistency over time in modelling (see, for example,
\cite{Dupire96}, \cite{Gatheral18}, \cite{smile11}, \cite{Zubelli19}).
		
Term structure models try to overcome this issue by making market
prices state variables of the model: the price to pay for this
neo-classical approach is complexity. We shall outline without going
into detail some of the suggested approaches here. Let us denote by
${(C_t(T,K))}_{0 \leq t \leq T}$ the stochastic process of plain
vanilla options on one underlying $ (S_t) $. Here $ K $ denotes the
option's strike price and $ T $ its maturity; let us ignore interest (and dividend) rates for the moment. The dynamic and static
no arbitrage constraints are expressed on the given stochastic basis
by the existence of an equivalent measure $ \mathbb{Q} $ such that
\[
  E_{\mathbb{Q}}[(S_T-K)_+|\mathcal{F}_t] = C_t(T,K)
\]
and
\[
  C_t(T,0) = S_t
\]
for $ 0 \leq t \leq T $ hold true. The question has been raised which
\emph{codebook} should be used to facilitate the most in dealing with
those constraints. Roughly speaking three suggestions have been made,
which we shortly introduce here:
\begin{itemize}
\item Given the current market price $S_t$ there is a unique (implied)
  volatility $ \sigma_t(T,K) $ such that the Black-Scholes formula
  $ \operatorname{BS} $ produces the correct market price
  \[
    \operatorname{BS}(T,K,S_t,\sigma_t(T,K)) = C_t(T,K)
  \]
  for $ 0 \leq t \leq T $. This, however, yields two problematic
  aspects: first, how to deal with the dynamic absence of arbitrage,
  and, second, how to express the static absence of arbitrage
  conditions for implied volatilities. Under some regularity
  assumptions it turns out that one can write necessary conditions for
  such a dynamics by imposing that
  $ (\sigma_t(T,K))_{0 \leq t \leq T} $ remains within the set of
  statically arbitrage free surfaces and that the process
  \[
    \big(\!\operatorname{BS}(T,K,S_t,\sigma_t(T,K))\big)_{0 \leq t \leq T}
  \] is a martingale for all $ T, K $ provided that $ S $ is one. This
  case was outlined by Schweizer and Wissel in \cite{schweizer08}
  following the work done by Sch\"onbucher in \cite{schonbucher99}.
	
\item Given the current market price $S_t$ there is a unique local
  volatility $ \sigma(t,s) $ such that the pricing operator
  $ \mathcal{P} $ for plain vanilla calls of the local volatility
  equation
  \[
    d X_r = \sigma(r,X_r) \,dB_r \, ; \; X_t = S_t
  \]
  produces the market prices $ C_t(T,K) $ for $ T \geq t $ and all
  $K$. Here situation is simpler, since local variance must simply be
  non-negative, however, the dynamic absence of arbitrage is more
  involved: first, we impose that $ S $ has to be a martingale, and,
  second, we need that
  \[
    \big(\mathcal{P}(S_t,\sigma_t(\cdot\,,\cdot),T,K)\big)_{0 \leq t \leq T}
  \]
  is a martingale, too, for all $ T , K $. This involves a full
  fledged solution operator of local volatility equations (see
  \cite{Carmona2012} for more information).\\
  In a similar context, Wissel builds on top of the cited reference in \cite{wissel07} 
  on a discrete expiry-strike grid an analogous codebook addressing additionally 
  the relevant problem of existence and uniqueness of a solution for the price SDE.
	
\item Given the current market price $S_t$ there is a unique
  time-dependent L\'evy process $(b_t,c_t,\nu_t) = \mathcal{L}_t $
  characterised by its L\'evy triplet such that the pricing operator
  $ \mathcal{P}$ of the corresponding exponential L\'evy
  \emph{martingale} produces the market prices $ C_t(T,K) $ for
  $ T \geq t $ and all $K$, just as before. Here the situation is slightly
  more complicated than in the case of local volatility, since L\'evy
  triplets are more complicated objects than non-negative functions of
  two variables. On the other hand, the pricing operator is
  considerably simpler due to Fourier pricing. We impose again that
  $ S $ is a martingale, and that
  \[
    \big(\mathcal{P}(S_t,\mathcal{L}_t,T,K)\big)_{0 \leq t \leq T}
  \]
  is a martingale, too, for all $ T , K $. This approach has been
  developed at the same time by Carmona and Nadtochiy
  (\cite{Carmona2012}) and by Kallsen and Kr\"uhner in
  \cite{Kallsen2015}, still with different choices of
  codebooks. Therefore, in the following we will refer to this
  approach as CNKK deploying author's initials.

\end{itemize}
	
It is the goal of this article to make the CNKK approach work in the
setting of Consistent Recalibration Models, i.e.~where we consider
tangent affine models. Basically speaking this amounts to storing the
information of a non-linear drift operator in a neural network in an
optimal way, in case when the time evolution is locally mimicking a
dynamically changing affine model. In our view, this is the simplest way to consistently
construct term structure dynamics, which do not come from finite dimensional realisations.

Actually this information, which is stored in the drift, corresponds
to solving an inverse problem or a calibration problem, see
\cite{CucKhoTei:2020} and the references therein for a general
background of this problem: more precisely, it is the inversion of the
above mentioned pricing operators given the current market state of
the underlyings' price and the term structure of derivatives'
prices. Let us outline this in case of the L\'evy codebook: there the
inverse problem corresponds to calculating the time-dependent L\'evy
triplet $L_0$ given the price of the underlying $S_0$ and the term
structure of derivatives' prices. Even though the map from model
characteristics to prices is usually smooth, it is due to
\emph{smoothing} properties, often hard to invert: existence,
uniqueness and stability issues (in the sense of Jacques Hadamard\footnote{The
three conditions Hadamard established in the 20\textsuperscript{th} century 
to define well-posedness of a problem 
$X\supseteq U\ni x \mapsto F(x) = y \in Y$ are existence of a solution, i.e. $\forall \,x\in U, \exists \,y \in Y$; 
uniqueness of the solution; 
and continuity with respect to the input, that is, if we are 
in metric spaces with distances $d_X$ and $d_Y$, 
$\forall \,\epsi > 0, \exists \, \delta_\epsi > 0$ such that 
$\forall \,x_1, x_2 \in U$ and $y_1, y_2 \in Y$ for which 
we have $y_1=F(x_1)$ and $y_2=F(x_2)$ we must have 
$d_X(x_1,x_2) < \delta_\epsi \Rightarrow d_Y(y_1,y_2)< \epsi$. 
If one (or more) of these conditions is not fulfilled, a problem is
said to be \emph{ill-posed}.})
appear. Machine learning technology provides one way to fix these
issues, which otherwise require sophisticated regularization
techniques, by implicit regularization, see, e.g.
\cite{HeiTeiWut:2020}. In other words: learning the map from
derivatives prices (given the current price of the underlying) to
model characteristics $L_0$ and storing the information in a neural
network provides an accurate map satisfying Hadamard's
requirements. We shall pursue this approach not in the originally
proposed way by solving a supervised learning problem, see, e.g.
\cite{Hernandez1}, but rather by storing first the information of the
pricing operator in a neural network and then inverting this network,
compare here, e.g. \cite{Horvath2019}.\\
In the very same years, other applications of ML to the same problem 
were developed. One of these is \cite{wiese2019}, where Wiese and co-authors 
make use of a Generative Adversarial Networks (GANs) to learn to 
simulate time series of the codebook introduced by Wissel 
in \cite{wissel07}. Thanks to the particular choice of the codebook,
it is relatively easy to learn new plausible prices that satisfy
static arbitrage constraints on discrete grids: as we recalled above, 
non-negativity of the local volatility process is the only requirement 
in this case. On the other hand, the work we present here does not try to get 
rid of static arbitrage possibilities alone, but of dynamic arbitrage 
as well, thus answering in a more complete and 
satisfactory way to the need of a realistic equity option market 
simulator, which was the \emph{raison d'\^etre} of \cite{wiese2019}.

One could also view our current model as an unusually parameterized
neural stochastic differential equation (NSDE) model, see, e.g.
\cite{CucKhoTei:2020} for details on this concept. NSDEs,
i.e.~stochastic differential equations with neural network
characteristics, are a wonderful concept to construct non-parametric
models, but it is quite delicate to write constraint dynamics with
neural networks. Therefore we have chosen Consistent Recalibration
Models with tangent affine models, where it is easier to express
constraints in terms of neural networks for the drift. A
non-parametric approach alluding to NSDEs (and based on random
signature methods, see \cite{CucGonGriOrtTei:2020}) will be presented
in upcoming work.

The remainder of the article is structured as follows. In the second
section, we introduce the concept of a L\'evy triplet \emph{codebook},
as shortly alluded to above, and we define consistent recalibration
(CRC) models. We also briefly review affine models and embed
stochastic volatility affine models in the context of CRC models,
outlining some key properties of this codebook.  The third section is
dedicated to one of the building blocks of the whole theory:
generalised Hull-White extensions.  We do not simply use the
Hull-White extension, as it was exploited when first defined, for the
calibration at the initial time of the term structure in the interest
rate models, but we think of it as a tool that allows for
recalibration of the model parameters. Further, we talk about a
\emph{generalised} extension, since we are replacing the pure drift
addition typical of interest rate models with a L\'evy process, which
naturally encodes a greater calibration power.  To make things
clearer, an example is laid out in the fourth section, where we
analyse how the generalised Hull-White extension is added to the
Heston model in order to get a consistent recalibration model, what we
call a generalised Bates model. The same example is important because
a very similar version of this model has been implemented numerically.
The fifth section is devoted to the CNKK equation: how it is derived,
defined and how it can be seen as a generalisation of the HJM
equation. Section~\ref{sec:CRC_math} is dedicated to the formal
definition of CRC models of stochastic volatility affine models with
piecewise constant model parameters for pricing stocks' derivatives.
Some numerical considerations are also listed, to show what are the
most relevant aspects to deal with in case of a concrete
implementation.  One of this point is the main subject of
Section~\ref{sec:deepCal}, where we discuss about calibrating the
model using a neural network. 
%A geometric interpretation is presented
%in Section~\ref{sec:fdr}, starting from theory of finite dimensional
%realisations and its connections with Frobenius' theorem. Subsections
%are dedicated to the available literature and to the architecture and
%training techniques used to achieve the final result. 
In the end, the
conclusion summarises the main results and novelties of the paper and
an appendix collects some representative pictures of volatility surfaces
obtained with our model.

\bigskip \textbf{Notation.} The set $\NN_0$ denotes the set of natural
numbers with $0$ included; $\RR_+^m$ the real vectors in $\RR^m$ whose
components are greater than $0$. Analogously, $\RR_{\geq0}$ denotes 
the real semi-line starting from 0 (included).
	
With $L(X)$ we denote the set of $X$-integrable predictable processes
for a semimartingale $X$. If we talk about $(X,Y)$ as a
$(m+n)$-semimartingale, we mean that $X$ is an $\RR^m$-valued
semimartingale and $Y$ is an $\RR^n$-valued semimartingale.
	
Whenever we apply complex logarithm of continuous functions
$\RR^n \ni u\mapsto f(u)\neq 0$, we use the normalisation
$\log f(0) = 0$, so that logarithms are uniquely defined (see Proposition 2.4 in \cite{keller-ressel2013}).
	
For the sake of readability, SDE and SPDE will be used as acronyms for
stochastic differential equation and stochastic partial differential
equation, respectively.
	
\section{Consistent Recalibration Models}
	
We take inspiration from the seminal paper of Kallsen and Kr\"{u}hner
\cite{Kallsen2015}, but we place ourselves in a more general framework that does
not necessarily rely on the infinite divisibility of the
processes. Let $(\Om, \cF, (\cF)_{t \geq 0}, \QQ)$ denote a filtered
probability space, where $\QQ$ represents a risk-neutral measure so
that discounted asset prices are supposed to be $\QQ$-local martingales. All
expectations, if not differently specified, are also taken with
respect to the same probability measure and are denoted by $\EE$. We
consider an adapted (multivariate) stochastic process $X \define \Xt$
taking values in $\RR^n$, which can be considered as logarithms of
price processes.
	
We assume that call options of any strike and maturity are liquidly
traded and denote the time $t$ value of a call option with maturity
$T$ and strike $K$ by $C_t(T,K)$. Having liquid market prices
for all maturities and strikes\footnote{In practice, and for our ensuing numerical algorithm, this is never the case.} translates, in mathematical terms, in having the marginal
distributions of several (at most $n$) underlying processes (under the pricing measure). Similar
to \cite{Kallsen2015}, we could at this point require that the given
marginal distributions of $X$ are infinitely divisible, that the
characteristic functions are absolutely continuous with respect to
time %\footnote{What about (continuous) derivative?}
and define the \emph{codebook} of our model as a ``forward'' L\'{e}vy
exponent and then define dynamics for such (infinite dimensional)
codebook.  But rather than focusing on the joint behaviour of $X$ and the
codebook, whose dynamics are expressed in function of a generic
semimartingale $M$, as done in \cite{Kallsen2015}, we exploit the
intuition behind the choice of such codebook, since it provides easier
conditions to avoid dynamic and static arbitrage, but we build around
it a new framework.
	
For this reason, we conveniently define consistent recalibration
models as models that keep \emph{consistency}, which means that future
realisations will be in a neighbourhood of the current state that can
always be reached with positive probability, and that are
\emph{analytically tractable}, thus looking as finite factor models
instantaneously. This is reached by introducing stochastic parameters,
whose dynamics could be extrapolated by market data, and by means of a
Hull-White extension, used to compensate the stochastic updates in the
parameters, while leaving the marginal distributions of the state
variables unchanged.
	
We start defining the set of functions that is the base for our
theory:
\begin{definition}[$\G_n$]
  The set $\G_n$ denotes the collection of continuous functions
  $\eta : \RR^n \times \RR_{\geq0} \to \CC$ such that there exists a
  càdlàg process $Z$ with independent increments and finite
  exponential moments $\EE[\exp((1+\epsi)\|Z_T\|)] < \infty$ for all
  $T\geq 0$ and for some $\epsi > 0$ satisfying
  \begin{equation}\label{forwardCharGn}
    \EE[\exp(\ii\scal{u}{Z_T}) ] = \exp\left(\ii\scal{u}{Z_0} + \int_0^T \eta(u,r)\,dr\right)
  \end{equation}
  for $u \in \RR^n$.
\end{definition}
	
	\begin{remark}\label{remark:strip}
          Requiring that for some $\epsi > 0$,
          $\EE[\exp((1+\epsi)\|Z_T\|)] < \infty$ implies that we can
          extend the left hand side of \eqref{forwardCharGn} to the
          strip $-\ii[0,1]^n \times \RR^n$, thus we could choose, for
          example, $u=-\ii$.
	\end{remark}
	
	\begin{remark}
          All functions $\eta \in \G_n$ are necessarily of
          L\'evy-Khintchine type at the short end ($r=0$ in \eqref{forwardCharGn}). Note that
          by doing so, we are extending the definitions given in
          \cite{Carmona2012} and \cite{Kallsen2015} since we only
          assume the function $\eta$ to be of L\'evy-Khintchine form
          at the short end.
	\end{remark}
	
	\begin{remark}\label{remarkXi}
          Often, elements in $\G_n$ are subject to additional
          no-arbitrage constraints in order to satisfy, for example,
          the martingale property for both the price processes
          $S = \exp(X)$ and call options $C_t(T,K)$ for all $T,K >0$
          (see Theorem 3.7 in \cite{Kallsen2015}).  For instance, if
          we consider $X = (X^i)_{i=1}^d$ as being a log-price process
          for some $i$, we also assume $\exp(X^i)$ is a martingale,
          which is equivalent to state that $\eta(e_i, r)=0$ with $e_i$
          being the $i$-th basis vector of $\RR^n$,
          i.e.~$ \scal{e_i}{X_T} = X^i_T $. We assume tacitly that
          such conditions are imposed if necessary. Notice in case of
          a components of $ X $ corresponding to interest rates we do
          not need to impose such a condition (since we do not need
          martingality).
	\end{remark}

	We can think of the set $\G_n$ as a chart, in the language of
        geometry, or codebook, in the language of mathematical
        finance, for all liquid market prices at one instant of
        time. If we want to consider their time evolution, we had
        better define $\G_n$-valued processes:
	\begin{definition}[$\G_n$-valued semimartingale]
          Let $(\Om,\cF,{(\cF_t)}_{t \geq 0}, \QQ)$ be a filtered
          probability space. A stochastic process $ \eta $ is called a
          \emph{$\G_n$-valued semimartingale} if
          $ {(\eta_t(u,T))}_{0 \leq t \leq T} $ is a complex-valued
          semimartingale for $ T \geq 0 $ and $ u \in \RR^n $ and if
		$$ 
		\bigl((u,r)\mapsto \eta_t(u, r+t) \bigr) \in \Gamma_n
                \,.
		$$ 
		In particular, all trajectories are assumed to be
                c\`adl\`ag.
              \end{definition}

	\begin{definition}[Regular decomposition]\label{def:regDecomposition}
          We say that $ \eta $ allows for a \emph{regular
            decomposition} with respect to a $d$-dimensional
          semimartingale $M$ if there exist predictable processes
          $ {(\alpha_t(u,T))}_{0 \leq t \leq T} $ taking values in
          $ \CC $ with $\alpha_t(0,T)=0$ for all $0\leq t\leq T$ and
          $ {(\beta_t(u,T))}_{0 \leq t \leq T}$, $\CC^d$-valued, with
          $\beta_t^i(0,T)=0$ for all $i$ and all $0\leq t\leq T$ and
          for $ T \geq 0 $ and $ u \in \RR^n $ such that
          \begin{equation}\label{regularDec}
            \eta_t(u,T) = \eta_0(u,T) + \int_0^t \alpha_s(u,T) \, ds + \sum_{i=1}^d \int_0^t \beta^i_s(u,T) \, dM^i_s
          \end{equation}
          for $0\leq t\leq T$, and
          $ {\left(\sqrt{\int_t^T {\norm{\beta_t(u,r)}}^2 \,
                dr}\right)}_{t \geq 0} \in L(M) $.
	\end{definition}

	In view of the two new definitions, we can also generalise the
        condition expressed in \eqref{forwardCharGn}:
	\begin{definition}[Conditional expectation condition]
          Let $(\Om,\cF,{(\cF_t)}_{t \geq 0},\QQ)$ be a filtered
          probability space, then we say that a tuple $ (X,\eta) $ of
          an $n$-dimensional semimartingale $ X $ and of a
          $ \G_n $-valued semimartingale $ \eta $ satisfies the
          \emph{conditional expectation condition} if
          \begin{equation}\label{cec}
            \Econd{\exp(\ii \scal{u}{X_t})}{\cF_s} = \exp \left(\ii \scal{u}{X_s} + \int_s^t \eta_s(u,r) \, dr \right)
          \end{equation}
          for $ 0 \leq s \leq t $.
	\end{definition}
	
	At this point, the link of the whole theory to its
        discrete-time counterpart exposed in \cite{Richter2017}
        becomes even clearer:
	\begin{definition}[Forward and process characteristics]
          Let $X$ be an adapted semimartingale taking values in
          $\RR^n$ and $\eta$ a $\G_n$-valued semimartingale with
          $\eta_s(0,t)=0$ for all $0\leq s\leq t$ and for which the
          conditional expectation condition \eqref{cec} is
          satisfied. Then the process $\eta$ is called \emph{forward
            characteristic process of $X$}.  Analogously, the process
          denoted by $\kappa_s^X$ and that coincides with the short
          end of the forward characteristics of $X$,
          i.e.~$\eta_{s-}(\cdot, s)$, with $\kappa_s^X(0)=0$ for all
          $s\geq 0$ is said \emph{(process) characteristic of $X$}.
	\end{definition}
	
	\begin{remark}
          Note that both processes are uniquely defined (up to a
          $d\QQ\otimes dt$-nullset):
          \begin{enumerate}
          \item The normalisation $\eta_s(0,t)=0$ for all
            $0\leq s\leq t$ ensures that the map
            $u \mapsto \eta_s(u,t)$ is continuous and uniquely defined
            through the use of the complex logarithm.
          \item Since the adapted process
            $\left(\exp\left(\ii\scal{u}{X_s} - \int_0^s
                \kappa_r^X(u)\, dr \right)\right)_{s\geq 0}$ is a
            local martingale (see Theorem \ref{cecImpliesDC} below)
            and $\kappa_r^X(0)=0$ for any $r\geq 0$, uniqueness
            follows from Lemma A.5 in \cite{Kallsen2015}.
          \end{enumerate}
	\end{remark}
	
	\begin{definition}[Term structure for derivatives]
          We call the tuple $ (X,\eta) $ of an $n$-dimensional
          semimartingale $ X $ and of its $ \Gamma_n $-valued forward
          characteristic process $\eta$ a \emph{term structure model
            for derivatives' prices}.
	\end{definition}

	With the following theorems, we are able to characterise which
        processes $\eta$ can be considered forward processes, given
        the existence of a regular decomposition. Discrete versions of
        the same theorems are given in \cite{Richter2017}, while
        proofs for the continuous case under examination are in
        \cite{Kallsen2015}.
	
	\begin{theorem}\label{cecImpliesDC}
          Let $(\Om,\cF,{(\cF_t)}_{t \geq 0},\QQ)$ be a filtered
          probability space together with a tuple $ (X,\eta) $ of an
          $n$-dimensional semimartingale $ X $ and of a
          $ \Gamma_n $-valued semimartingale $ \eta $ satisfying the
          \emph{conditional expectation condition}, then
          \begin{itemize}
          \item the differentiable, predictable characteristic
            $ \kappa^X $ of the $n$-dimensional semimartingale $ X $
            exists and is given by $ \kappa_t^X(u)= \eta_{t-}(u,t) $
            (usually called \emph{short end condition}) for
            $ t \geq 0 $ and $ u \in \RR^n$, i.e. the process
            \begin{equation}\label{predictable-char}
              \exp \left(\ii \scal{u}{X_t} - \int_0^t \eta_{s-}(u,s) ds \right)
            \end{equation}
            is a local martingale.
          \item If $ \eta $ allows for a regular decomposition
            \eqref{regularDec} with respect to a $d$-dimensional
            semimartingale $ M $, then the \emph{drift condition}
            \begin{equation}\label{drift-cond}
              \int _t^ T \alpha_t(u,r) \, dr = \eta_{t-}(u,t) - \kappa_t^{(X,M)}\left(u, - \ii \int_t^T \beta_t(u,r) \, dr\right)
            \end{equation}
            holds for $ 0 \leq t \leq T $ and
            $ u \in - \ii [0,1]^n \times \RR^n$.
          \end{itemize}
	\end{theorem}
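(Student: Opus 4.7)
My plan is to prove the two statements in sequence, extracting the information in the conditional expectation condition \eqref{cec} by computing semimartingale dynamics and then matching drifts.

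For the first bullet, I would start from the observation that for every fixed maturity $T\geq 0$ and $u\in\RR^n$, the CEC applied to the pair $(t,T)$ yields
\[
L_t^{u,T} \vcentcolon= \exp\!\Bigl(\ii\scal{u}{X_t} + \int_t^T \eta_t(u,r)\,dr\Bigr) = \Econd{\exp(\ii\scal{u}{X_T})}{\cF_t},
\]
so each $L^{u,T}$ is a true martingale on $[0,T]$. Letting $T\downarrow t$, the integral term collapses and one recovers that the instantaneous compensator of $\exp(\ii\scal{u}{X_\cdot})$ must be $\eta_{t-}(u,t)$; more formally, one verifies that $\eta_{t-}(u,t)$ is a predictable, differentiable characteristic of $X$ in the sense of Jacod–Shiryaev (the normalisations $\eta_s(0,t)=0$ together with c\`adl\`agness of $\eta$ and the L\'evy–Khintchine shape at the short end provide the required regularity). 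The resulting local martingale property of \eqref{predictable-char} then follows from the standard equivalence between differentiable characteristics and exponential compensators.

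For the second bullet, given the regular decomposition \eqref{regularDec}, the plan is to expand $Y_t^T \vcentcolon= \int_t^T \eta_t(u,r)\,dr$ as a semimartingale in $t$ via a stochastic Fubini argument, which is the precise reason why the integrability condition $\bigl(\sqrt{\int_t^T\|\beta_t(u,r)\|^2\,dr}\bigr)_{t\geq 0}\in L(M)$ is imposed in Definition~\ref{def:regDecomposition}. Interchanging the order of integration and then differentiating the upper/lower limits yields
\[
dY_t^T = \Bigl(-\eta_{t-}(u,t) + \int_t^T \alpha_t(u,r)\,dr\Bigr)dt + \sum_{i=1}^d \Bigl(\int_t^T \beta_t^i(u,r)\,dr\Bigr)dM_t^i,
\]
up to the appropriate jump corrections. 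Applying It\^o's formula to $L_t^{u,T} = \exp(\ii\scal{u}{X_t} + Y_t^T)$, identifying the drift of the resulting complex semimartingale, and imposing that this drift vanishes (since $L^{u,T}$ is a local martingale) produces an equation of the form
\[
\ii\scal{u}{\mu_t^X} + \int_t^T \alpha_t(u,r)\,dr - \eta_{t-}(u,t) + (\text{quadratic and jump corrections}) = 0.
\]
The linear and higher-order contributions involving $u$ and $-\ii\int_t^T\beta_t(u,r)\,dr$ assemble exactly into $\kappa_t^{(X,M)}\!\bigl(u,-\ii\int_t^T\beta_t(u,r)\,dr\bigr)$ by the L\'evy–Khintchine shape of the joint characteristic of $(X,M)$ at the short end (this is where Remark~\ref{remark:strip} is used to justify evaluation on the strip $-\ii[0,1]^n\times\RR^n$). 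Rearranging gives \eqref{drift-cond}.

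The main obstacle is the justification of the stochastic Fubini exchange and of the It\^o expansion when the integrand $\eta_t(u,r)$ is only $\CC$-valued and the underlying driver $M$ is a general semimartingale with possibly infinite activity jumps: one must control the moment conditions implicit in $\eta\in\G_n$ uniformly in $(u,r)$ on compact sets, and verify that the jump corrections reassemble into the L\'evy–Khintchine form of $\kappa^{(X,M)}$. Once this bookkeeping is done, both assertions follow at once from matching drifts in the martingale identity for $L^{u,T}$.
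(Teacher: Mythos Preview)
Your proposal is essentially the approach of Kallsen--Kr\"uhner~\cite{Kallsen2015}, which is exactly what the paper defers to for the proof of this theorem (no argument is given in the paper itself). The second bullet in particular---stochastic Fubini on the regular decomposition to obtain the $dt$- and $dM$-parts of $\int_t^T\eta_t(u,r)\,dr$, then It\^o on the martingale $L_t^{u,T}$ and drift matching---is precisely how the drift condition is derived there, and you have correctly flagged the integrability hypothesis in Definition~\ref{def:regDecomposition} as the enabling condition for the Fubini step.

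One small tightening for the first bullet: the phrase ``letting $T\downarrow t$, the integral term collapses'' is only a heuristic and does not by itself identify the compensator. The clean way is to use the CEC with $s=t$ and $t$ replaced by $t+h$, obtaining $\Econd{\exp(\ii\scal{u}{X_{t+h}-X_t})}{\cF_t}=\exp\bigl(\int_t^{t+h}\eta_t(u,r)\,dr\bigr)$, and then to compare the two predictable finite-variation processes $\int_0^\cdot\eta_s(u,r)\,dr\big|_{r=s}$ and $\int_0^\cdot\eta_{s-}(u,s)\,ds$ using c\`adl\`ag regularity of $\eta$ (they differ on a $dt$-null set), which yields the local martingale property of \eqref{predictable-char} via the standard uniqueness result for differentiable characteristics (cf.\ Lemma~A.5 in \cite{Kallsen2015}, also alluded to in the paper's uniqueness remark). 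Your ``more formally'' parenthetical already gestures at this, so there is no genuine gap---just make sure the write-up does not rely on the limit $T\downarrow t$ as the actual argument.
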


	\noindent It might be useful for the reader having in mind the
        following expression for the forward characteristics of the
        $(n+d)$-semimartingale $(X,M)$: for all $t$ such that
        $0\leq t\leq T$, we have
	\begin{equation*}
          \exp\left(\eta_t^{(X,M)}(u,v;\, T)\right) = \Econd{\exp\left(\ii \scal{u}{(X_T-X_t)} + \ii \scal{v}{M_T-M_t} \right)}{\cF_t}
	\end{equation*}
	from which it is easier to derive the expression for
        $\kappa_t^{(X,M)}$ for $0\leq t\leq T$.

	\begin{theorem}\label{thm:inversececImpliesDC}
          Let $(\Om, \cF, (\cF_t)_{t\geq 0}, \QQ)$ be a filtered
          probability space together with a tuple $(X, \eta)$ of a
          $n$-dimensional se\-mi\-mar\-tin\-gale $X$ and a
          $\G_n$-se\-mi\-mar\-tin\-gale $\eta$. Furthermore, assume
          that $\eta$ allows for a regular decomposition
          \eqref{regularDec} with respect to a $d$-dimensional
          semimartingale $M$ such that the predictable characteristics
          of $X$ satisfy \eqref{predictable-char} and such that the
          drift condition \eqref{drift-cond} holds, then the
          \emph{conditional expectation condition} holds true.
	\end{theorem}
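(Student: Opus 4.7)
The plan is to show that for each fixed $T \ge 0$ and $u \in \RR^n$ the process
$$
  N_t \define \exp\left(\ii \scal{u}{X_t} + \int_t^T \eta_t(u,r)\,dr\right), \qquad 0 \le t \le T,
$$
is a martingale on $[0,T]$. Since $N_T = \exp(\ii\scal{u}{X_T})$, the identity $\EE[N_T \mid \cF_s] = N_s$ is exactly \eqref{cec}, and varying $T$ yields the full statement. So the task reduces to showing that $N$ is a local martingale and then upgrading it to a true martingale.

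First I would compute the dynamics of $g(t,T) \define \int_t^T \eta_t(u,r)\,dr$ as a semimartingale in $t$. Substituting the regular decomposition \eqref{regularDec} into the integrand and applying classical Fubini to the $\alpha$-contribution together with stochastic Fubini to the $\beta$-contribution (the latter justified by the assumption $\bigl(\sqrt{\int_t^T \norm{\beta_t(u,r)}^2\,dr}\bigr)_{t\ge 0} \in L(M)$), and using the regular decomposition evaluated at $r=t$ to absorb the boundary term, one arrives at
$$
  dg(t,T) = \Bigl(-\eta_t(u,t) + \int_t^T \alpha_t(u,r)\,dr\Bigr) dt + \sum_{i=1}^d v^i_t \, dM^i_t,
$$
where $v_t \define \int_t^T \beta_t(u,r)\,dr \in \CC^d$ is a predictable integrand.

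Next, I would apply Itô's formula to $N_t = \exp(W_t)$ with $W_t \define \ii\scal{u}{X_t} + g(t,T)$. Collecting the drift of $X$, the continuous quadratic covariations between $X^c$ and $M^c$, and the compensator of the jump measure of $(X,M)$ pushed forward through the predictable complex linear form $(x,y) \mapsto \ii\scal{u}{x} + \scal{v_t}{y}$, one recognises that the drift of $N$ equals $N_{t-}$ times
$$
  -\eta_t(u,t) + \int_t^T \alpha_t(u,r)\,dr + \kappa_t^{(X,M)}(u,-\ii v_t),
$$
the third summand being the joint differential characteristic of $(X,M)$ evaluated at the adapted, complex argument $(u,-\ii v_t)$. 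Invoking now the drift condition \eqref{drift-cond}, this bracket collapses to $\eta_{t-}(u,t) - \eta_t(u,t)$, which vanishes $d\QQ \otimes dt$-almost everywhere; therefore $N$ is a local martingale on $[0,T]$.

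Finally, I would upgrade from local to true martingale using that $|\exp(\ii\scal{u}{X_t})| = 1$ for $u \in \RR^n$ and that $\Re\bigl(\int_t^T \eta_t(u,r)\,dr\bigr) \le 0$, because by \eqref{forwardCharGn} the exponential of this integral represents the modulus of a characteristic function and is thus bounded by $1$. Hence $|N_t| \le 1$, so $N$ is a bounded local martingale and consequently a true martingale. The main obstacle is the Itô computation in the previous step: one must match the jump-compensator integral $\int \bigl(e^{\ii\scal{u}{x}+\scal{v_t}{y}} - 1 - \ii\scal{u}{h^X(x)} - \scal{v_t}{h^M(y)}\bigr)\,F^{(X,M)}_t(dx,dy)$ (with an appropriate truncation pair $h^X, h^M$) and the Gaussian term $\tfrac{1}{2}\bigl(-\scal{u}{c_t^X u} + 2\ii\scal{u}{c_t^{XM} v_t} + \scal{v_t}{c_t^M v_t}\bigr)$ against the L\'evy--Khintchine expression for $\kappa_t^{(X,M)}(u,-\ii v_t)$ with a \emph{predictable complex} argument $v_t$ rather than a deterministic one; the exponential-moment assumption built into $\G_n$ is precisely what legitimises evaluating $\kappa^{(X,M)}$ on the strip containing $-\ii v_t$.
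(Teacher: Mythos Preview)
Your argument is correct and follows the same route as the paper's reference: the paper does not give its own proof but defers to \cite{Kallsen2015}, whose Theorem~3.4 proceeds exactly as you do---compute the semimartingale decomposition of $\int_t^T \eta_t(u,r)\,dr$ via stochastic Fubini, apply It\^o to the exponential, use the drift condition to kill the finite-variation part, and conclude true martingality from the bound $|N_t|\le 1$ coming from $\eta_t\in\Gamma_n$. Your identification of the delicate step (evaluating $\kappa^{(X,M)}$ at the predictable complex argument $-\ii v_t$) is also the point where \cite{Kallsen2015} spends care, and your justification via the exponential-moment hypothesis is the right one.
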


	\begin{corollary}\label{cor:localIndip}
          Let $(\Om, \cF, (\cF_t)_{t\geq 0}, \QQ)$ be a filtered
          probability space together with a tuple $(X, \eta)$ where
          $X$ is a $n$-dimensional semimartingale and $\eta$,
          $\G_n$-\allowbreak se\-mi\-mar\-tin\-gale, satisfies the
          \emph{conditional expectation condition}.  Moreover, assume
          that $\eta$ allows for a regular decomposition
          \eqref{regularDec} with respect to a $d$-dimensional
          semimartingale $M$ and that the processes $X$ and $M$ are
          locally independent\footnote{See \cite{Kallsen2015} for a
            rigorous definition.}, i.e.
          \begin{equation}
            \kappa_t^{X,M}(u_1, u_2) = \kappa_t^X(u_1) + \kappa_t^M(u_2)
          \end{equation}
          for $u_1 \in \RR^n$ and $u_2 \in \RR^d$. Then
          \begin{equation*}
            \int_t^T \alpha_t(u,r)\,dr = -\kappa_t^M\left(-\ii \int_t^T \beta_t(u,r)\,dr\right)
          \end{equation*}
          for $0\leq t\leq T$ and $u \in \ii[0,1]\times \RR^n$ and,
          furthermore, the conditional expectation
          Condition~\eqref{cec} rewrites as
          \begin{equation*}
            \Econd{\exp\left(\int_s^t \eta_{r-}(u,r)\,dr\right)}{\cF_s} = \exp\left(\int_s^t\eta_s(u,r)\,dr\right)
          \end{equation*}
          for $0\leq s\leq t$.
	\end{corollary}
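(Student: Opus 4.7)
The plan is to prove the two displayed identities in sequence, using the first to derive the second. The first identity follows immediately from the drift condition \eqref{drift-cond} of Theorem~\ref{cecImpliesDC}: under local independence the joint characteristic splits as $\kappa_t^{(X,M)}(u_1,u_2)=\kappa_t^X(u_1)+\kappa_t^M(u_2)$, and the short-end identity $\kappa_t^X(u)=\eta_{t-}(u,t)$ makes the $\eta_{t-}(u,t)$ term on the right-hand side of \eqref{drift-cond} cancel, leaving exactly $\int_t^T \alpha_t(u,r)\,dr = -\kappa_t^M(-\ii \int_t^T \beta_t(u,r)\,dr)$.

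For the rewritten conditional expectation condition, the strategy is to convert the statement into an exponential-martingale assertion for $M$ alone. Evaluating the regular decomposition \eqref{regularDec} at $r-$ and at $s$ and subtracting gives
\[
\int_s^t \eta_{r-}(u,r)\,dr - \int_s^t \eta_s(u,r)\,dr = \int_s^t\!\int_s^r \alpha_v(u,r)\,dv\,dr + \sum_{i=1}^d\int_s^t\!\int_s^r \beta_v^i(u,r)\,dM_v^i\,dr.
\]
Applying deterministic and stochastic Fubini --- both licit under the integrability condition built into Definition~\ref{def:regDecomposition} --- to interchange the orders of integration, and then substituting the first identity with $v$ and $t$ replacing $t$ and $T$, one obtains that this difference equals $\sum_{i=1}^d \int_s^t b_v^i\,dM_v^i - \int_s^t \kappa_v^M(-\ii b_v)\,dv$, where $b_v := \int_v^t \beta_v(u,r)\,dr$.

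Exponentiating this identity gives $\exp(\int_s^t \eta_{r-}(u,r)\,dr) = \exp(\int_s^t \eta_s(u,r)\,dr)\cdot N_t^{(s)}$, where $N_t^{(s)} := \exp\bigl(\sum_i \int_s^t b_v^i\,dM_v^i - \int_s^t \kappa_v^M(-\ii b_v)\,dv\bigr)$. The process $N^{(s)}$ is the compensated exponential of the predictable integrand $b$ against $M$: by It\^o's formula applied to $\exp(\cdot)$, the finite-variation correction $-\int \kappa_v^M(-\ii b_v)\,dv$ exactly cancels the drift, continuous-bracket and compensated-jump contributions dictated by the L\'evy-Khintchine representation of $\kappa^M$, so $N^{(s)}$ is a local martingale with $N_s^{(s)}=1$. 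Since $\exp(\int_s^t \eta_s(u,r)\,dr)$ is $\cF_s$-measurable, conditioning on $\cF_s$ and using $\Econd{N_t^{(s)}}{\cF_s}=1$ yields the claimed rewriting. The main obstacle is precisely this last step --- upgrading $N^{(s)}$ from a local to a genuine martingale --- which is secured by the exponential-moment condition on the strip $-\ii[0,1]^n\times\RR^n$ in Remark~\ref{remark:strip} together with the integrability implicit in the regular decomposition.
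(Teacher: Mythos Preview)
Your argument for the first identity is exactly what the paper has in mind (and does not spell out): Theorem~\ref{cecImpliesDC} plus the additive splitting of $\kappa^{(X,M)}$ together with the short-end relation $\kappa_t^X(u)=\eta_{t-}(u,t)$ immediately yield the simplified drift condition.

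For the second identity the paper's proof is a single sentence --- ``To obtain the new form of the conditional expectation condition it is enough to use \eqref{predictable-char}'' --- and supplies no further detail. Your derivation is substantially more explicit and follows a genuinely different thread: instead of working with the $X$-local martingale \eqref{predictable-char}, you eliminate $X$ altogether by expanding $\eta_{r-}-\eta_s$ via the regular decomposition, applying stochastic Fubini, and recognising the resulting expression as the exponential compensator of $\int b\,dM$. This route has the advantage of making completely transparent \emph{where} local independence is used (only through the simplified drift condition, in the substitution step), whereas the paper's hint leaves that mechanism implicit.

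One point deserves tightening. Your appeal to Remark~\ref{remark:strip} to upgrade $N^{(s)}$ from a local to a true martingale is misdirected: that remark concerns exponential moments of the $\Gamma_n$-process $Z$ (hence of $X$), not of $M$ or of stochastic integrals against $M$. The $L(M)$-condition in Definition~\ref{def:regDecomposition} guarantees the stochastic integral and the Fubini interchange are well defined, but does not by itself give uniform integrability of the complex exponential $N^{(s)}$. Since $b_v$ is $\CC^d$-valued, the usual positive-supermartingale shortcut is unavailable. The cleanest fix is to invoke the true-martingale criteria for such exponentials developed in \cite{Kallsen2015} (which is, after all, where the paper defers for the underlying Theorems~\ref{cecImpliesDC} and~\ref{thm:inversececImpliesDC}); the structure of your argument is correct, only this final integrability bookkeeping needs to be sourced more carefully.
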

	\begin{proof}
          To obtain the new form of the conditional expectation
          condition is enough to use \eqref{predictable-char}.
	\end{proof}

	The two previous theorems basically ratify the equivalence
    between the conditional expectation condition on one hand, and
    the short end and drift conditions on the other. In
    \cite{Kallsen2015}, since they assume $\eta$ being of
    L\'evy-Khintchine type for all times, it is possible to show
    equivalence with the fact of $S = \exp(X)$ and $C_t(T, K)$
    being martingales. This is not given for free in our settings,
    but requires additional assumptions (see Remark
    \ref{remarkXi}). For example, requiring that $S = \exp(X)$ is
    a 1-dimensional martingale is equivalent to the condition
    $\eta_s(-\ii, t) = 0$ for all $0\leq s\leq t$. In this case,
    indeed, we can write
    \begin{equation*}
      \Econd{e^{\ii {u}{(X_t-X_s)} }}{\cF_s} = \exp\left(\int_s^t \eta_s(u,r)\, dr\right)
    \end{equation*}
    and for $u = -\ii$ we have $\Econd{e^{\,X_t-X_s}}{\cF_s}=1$
    for all $0\leq s\leq t$.

	\begin{remark}
          Forward characteristics encode the term structure of
          distributions of increments of the stochastic process $X$,
          i.e.~for $0 \leq t \leq T$, the distributions of $X_T - X_t$
          conditional on the information $\cF_t$ at time $t$.  Notice
          that there is redundant information in processes of forward
          characteristics, which then translates into the drift
          conditions \eqref{drift-cond}.
	\end{remark}

	\subsection{Affine processes}
	In this subsection, we introduce affine processes and give
        some important re\-sults on their forward characteristic
        processes.  Moreover, since we are mainly interested in affine
        stochastic volatility models, we will state some properties
        for their particular case.
	
	Let $D$ be a non empty Borel subset of $\RR^d$ to which we
        associate the set
        $\cU \define \{u \in \CC^d : \sup_{x\in D}\Re \scal{u}{x} <
        \infty\}$.
	\begin{definition}[Affine process]
          An affine process is a time-homogeneous Markov process
          $(X_t,\PP^x)_{t\geq 0,x\in D}$ with state space $D$, whose
          characteristic function is an exponentially affine function
          of the state vector.  This means that its transition kernel
          $p_t$ satisfies the following:
          \begin{itemize}
          \item it is stochastically continuous,
            i.e.~$\lim_{s\to t} p_s(x,\cdot)=p_t(x,\cdot)$ weakly on
            $D$ for every $t \geq 0$ and $x\in D$, and
          \item its Fourier-Laplace transform has exponential affine
            dependence on the initial state. This means that there
            exist functions $\Phi:\cU \times \RR_{\geq 0} \to \CC$ and
            $\psi:\cU \times \RR_{\geq 0} \to \CC^{d}$ with
            \begin{equation}\label{charFuncAffine}
              % \EE_{x,y}\left[e^{\langle u, X_t\rangle + \langle v,
              %   Y_t
              %   \rangle} \right] =
              \EE_{x}\left[e^{\langle u, X_t\rangle} \right] =
              \Phi(u,t) e^{\scal{x}{\psi(u,t)}},
            \end{equation}
            for all $x\in D$, $u \in \cU$ and $t\in \RR_{\geq 0}$.
          \end{itemize}
	\end{definition}
	
	\begin{remark}
          The existence of a filtered probability space
          $(\Om, \cF, (\cF_t)_{t\geq 0})$ is already included by the
          notion of Markov process (see \cite{keller-ressel2011}).
	\end{remark}
	
	\begin{remark}
          The definition we gave is not the original provided by
          Duffie et al. in \cite{Duffie2003} but a slightly more
          general one: the right hand side of \eqref{charFuncAffine}
          is equal to $e^{\phi(u,t) + \langle x, \psi(u,t)\rangle}$ as
          long as we know that $\Phi(u,t)\neq 0$, but this can be
          shown (\cite{keller-ressel2013}) and not postulated (as done
          in \cite{Duffie2003}). From now on, we assume
          $\Phi(u,t) = \exp(\phi(u,t))$.  A-priori we do not even have
          a unique definition of the functions $\psi$ and $\phi$, but
          we can assume the normalisation $\phi(u,0)=0$ and
          $\psi^i(u,0)=u$ for all $u\in \cU$ and all $i=1, \dots, d$,
          which makes the functions unique.
	\end{remark}

	In this subsection we build a generic example for term
        structure models for derivatives' prices. Therefore, we define
        an affine stochastic volatility model:
	\begin{definition}[Affine stochastic volatility model]
          Let us consider a proper convex cone $ C \subset \RR^m$ (the
          \emph{stochastic covariance structures}). An \emph{affine
            stochastic volatility model} is a time-homogenous affine
          (Markov) process $(X,Y)$ taking values in $ \RR^n \times C $
          relative to some filtration $(\cF_t)_{t\geq 0}$ and with
          state space $D = \RR^n \times C $ such that
          \begin{itemize}
          \item it is stochastically continuous, that is,
            $\lim_{s\to t} p_s(x,y,\cdot)=p_t(x,y,\cdot)$ weakly on
            $D$ for every $t \geq 0$ and $(x,y)\in D$, and
          \item its Fourier-Laplace transform has exponential affine
            dependence on the initial state. This means that there
            exist (deterministic) functions
            $\phi:\cU \times \RR_{\geq 0} \to \CC$ and
            $\psi_C:\cU \times \RR_{\geq 0} \to \CC^m$ with
			
			\begin{equation}\label{charFuncAffine2}
              \Econd{e^{\scal{u}{X_t} + \langle  v, Y_t \rangle}}{\cF_s} = e^{\phi(u,v,t-s) + \langle u,X_s \rangle + \langle \psi_C(u,v,t-s),Y_s\rangle},
			\end{equation}
			for all $(x,y)\in D$, $0\leq s \leq t$ and
                        $(u,v) \in \cU$, where
			$$
			\cU := \{ (u,v) \in \CC^{n+m} \, | \;
                        e^{\scal{u}{\cdot} + \scal{v}{\cdot}} \in
                        L^{\infty}(D) \},
			$$
			and the normalisations $\phi(u,v,0) = 0 $ and
            $\psi_C^i(u,v,0) = v$ for all $(u,v)\in \cU$
            and $i=1, \dots, m$.
          \end{itemize}
        \end{definition}

	\begin{remark}
          In line with literature on affine processes there is a
          $ \CC^{n+m} $-valued function $\psi$, whose projection onto
          the $ X $-directions is $u$, as exemplified in
          \eqref{charFuncAffine2}. Whence we only need the projection
          in the $ C $-directions, which we denote by $ \psi_C $. This
          corresponds to a standard assumption if we consider $X$ as a
          price process: if we move $X_s$ by a quantity $x$, then also
          $X_t$ gets shifted by the same amount.
          % (see MOMENT EXPLOSIONS AND LONG-TERM BEHAVIOR OF AFFINE
          % STOCHASTIC
          % VOLATILITY MODELS di Keller-Ressel)
	\end{remark}

	Functions $\phi$ and $\psi_C$ are important because they allow
        the introduction of the so-called \emph{functional
          characteristics} (because of complete characterisation) of
        the affine process $(X,Y)$. We define
	\begin{equation}\label{eq:riccati1}
          F(u,v) \define \pd{\phi}{t}(u,v,t)\Bigr|_{t=0^+}, \qquad R_C(u,v) \define \pd{\psi_C}{t}(u,v,t)\Bigr|_{t=0^+}
	\end{equation}
	for all $(u,v) \in \cU$ and continuous in $(0,0)$ (see
        \cite{keller-ressel2013}). Equations \eqref{eq:riccati1} are
        called \emph{Riccati equations}.
	
	More in general, we can also define the \emph{generalised
          Riccati equations}\footnote{The name comes from the fact
          that they boil down to the well-known Riccati equations when
          $(X,Y)$ is a diffusion process.} and prove the following
        theorem (from \cite{keller-ressel-PhD}):
	\begin{theorem}\label{thm:riccati_odes}
          Suppose that $|\phi(u,w,T)|< \infty$ and
          $\norm{\psi_C(u,w,T)}< \infty$ for some
          $(u,w,T) \in \cU \times \RR_{\geq 0}$ . Then for all
          $t \in [0,T]$ and $v$ with $\Re v \leq \Re w$ the
          derivatives \eqref{eq:riccati1} exist. Moreover, for
          $t \in [0,T)$, $\phi$ and $\psi_C$ satisfy the generalised
          Riccati equations:
          \begin{subequations}
            \begin{alignat}{2}
              &\pd{}{t}\phi(u,v,t) = F(u,\psi_C(u,v,t)), \qquad &&\phi(u,v,0) = 0\\
              &\pd{}{t}\psi_C(u,v,t) = R_C(u,\psi_C(u,v,t)), \qquad
              &&\psi_C(u,v,0) = v.
            \end{alignat}
          \end{subequations}
	\end{theorem}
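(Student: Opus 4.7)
The plan is to reduce the ODEs to the semi-flow (cocycle) identities satisfied by $\phi$ and $\psi_C$ and then differentiate at the origin. First I would apply the tower property to \eqref{charFuncAffine2} at times $0\leq s\leq s+t$ conditionally on $\cF_s$: writing $\EE_{(x,y)}\bigl[e^{\scal{u}{X_{s+t}}+\scal{v}{Y_{s+t}}}\bigr]=\EE_{(x,y)}\bigl[\EE[\,\cdot\,\mid\cF_s]\bigr]$ and using \eqref{charFuncAffine2} twice, first to resolve the inner conditional expectation as $e^{\phi(u,v,t)+\scal{u}{X_s}+\scal{\psi_C(u,v,t)}{Y_s}}$ and then to evaluate the outer expectation, and matching the exponential-affine coefficients in $x$ and $y$ on both sides yields the semi-flow equations
\[
\phi(u,v,s+t) = \phi(u,v,t) + \phi\bigl(u,\psi_C(u,v,t),s\bigr),\qquad \psi_C(u,v,s+t) = \psi_C\bigl(u,\psi_C(u,v,t),s\bigr),
\]
valid for all $(u,v)\in\cU$ and $s,t\geq0$ for which the entries are defined.

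Second, I would propagate the finiteness assumption $|\phi(u,w,T)|<\infty$, $\norm{\psi_C(u,w,T)}<\infty$ to the full strip $\{v:\Re v\leq \Re w\}\times[0,T]$. Since $Y_t$ takes values in the cone $C$, the ordering $\Re v\leq \Re w$ (interpreted so that $\Re\scal{v}{y}\leq\Re\scal{w}{y}$ for all $y\in C$) yields the pointwise domination $\bigl|e^{\scal{u}{X_t}+\scal{v}{Y_t}}\bigr|\leq e^{\Re\scal{u}{X_t}+\Re\scal{w}{Y_t}}$; the finite $(u,w,T)$-moment then dominates the $(u,v,t)$-moments for every $t\in[0,T]$ through the Markov semigroup. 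Combined with the semi-flow identity this also guarantees that $\psi_C(u,v,t)$ stays inside the region on which the functional characteristics $F$ and $R_C$ are well-defined.

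Third, I would fix $t\in[0,T)$ and differentiate the semi-flow identities from the right at $s=0^+$. Using the normalisations $\phi(u,v',0)=0$ and $\psi_C(u,v',0)=v'$, dividing by $s$ and letting $s\downarrow 0$ with $v':=\psi_C(u,v,t)$, the definitions \eqref{eq:riccati1} give
\[
\pd{\phi}{t}(u,v,t) = \lim_{s\to 0^+}\frac{\phi(u,\psi_C(u,v,t),s)}{s} = F\bigl(u,\psi_C(u,v,t)\bigr),
\]
and analogously $\partial_t\psi_C(u,v,t)=R_C(u,\psi_C(u,v,t))$; the initial data are read off the normalisations.

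The main obstacle is guaranteeing the existence of the right-derivatives at $s=0^+$ not only at the initial argument $(u,v)$ but at every point $(u,\psi_C(u,v,t))$ visited by the flow; this is exactly the regularity statement for affine processes. It would follow by combining the stochastic continuity of $(X,Y)$ with the uniform domination supplied in the second step, which makes the incremental quotients uniformly integrable along the trajectory. Continuity of $F$ and $R_C$ on the relevant subset of $\cU$ then upgrades the pointwise identities to genuine ODEs on $[0,T)$.
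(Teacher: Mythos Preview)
The paper does not supply a proof of this theorem; it explicitly attributes the result to \cite{keller-ressel-PhD} and states it for later use. Your proposed argument is precisely the standard route taken there (and in the related article \cite{keller-ressel2013}): derive the Chapman--Kolmogorov/semi-flow identities
\[
\phi(u,v,s+t)=\phi(u,v,t)+\phi\bigl(u,\psi_C(u,v,t),s\bigr),\qquad
\psi_C(u,v,s+t)=\psi_C\bigl(u,\psi_C(u,v,t),s\bigr)
\]
from the Markov property and the affine transform formula, propagate the finiteness hypothesis over the strip $\{v:\Re v\leq \Re w\}\times[0,T]$, and then differentiate in $s$ at $s=0^+$ using the definitions of $F$ and $R_C$. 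So your plan matches the cited source.

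One point deserves tightening. In your second step you write that ``the finite $(u,w,T)$-moment then dominates the $(u,v,t)$-moments for every $t\in[0,T]$ through the Markov semigroup''; this is the correct conclusion but not a one-line domination. What one actually needs is the lemma (in \cite{keller-ressel-PhD}, see also \cite{keller-ressel2013}) that finiteness of $\phi,\psi_C$ at $(u,w,T)$ implies finiteness at $(u,w,t)$ for all $t\in[0,T]$, together with the observation that $\psi_C(u,v,t)$ remains in the admissible domain so that $F$ and $R_C$ are defined along the trajectory. Likewise, your ``main obstacle''---existence of the right derivatives at every intermediate point $(u,\psi_C(u,v,t))$---is exactly the regularity theorem for affine processes; you should invoke it as such rather than sketch a fresh uniform-integrability argument, since the proof of that regularity is nontrivial and is precisely what \cite{keller-ressel-PhD} and \cite{keller-ressel2011} establish.
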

	
	We can also derive the following proposition:
	\begin{proposition}\label{prop:riccati_odes}
          Let $ (X,Y) $ be a homogenous affine process taking values
          in $ D = \RR^n \times C $, then for $t\leq T$ we have that
          \[
            \phi(u,0,t) = \int_0^t F(u,\psi_C(u,0,s))\, ds
          \]
          and
          \[
            \psi_C(u,0,t) = \int_0^t R_C(u,\psi_C(u,0,s))\, ds,
          \]
          where $ (u,v) \mapsto F(u,v) $ and
          $ (u,v) \mapsto \langle R_C(u,v),y \rangle $ are of
          L\'evy-Khintchine form.
	\end{proposition}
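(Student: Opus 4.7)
The plan has two parts: first, derive the two integral identities directly from the generalised Riccati ODEs of Theorem~\ref{thm:riccati_odes}; second, identify $F$ and $\langle R_C(\cdot,\cdot), y\rangle$ as Lévy-Khintchine exponents by appealing to the classical regularity theory for affine processes.

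For the integral identities, I would apply Theorem~\ref{thm:riccati_odes} with $w = v = 0$. The required finiteness $|\phi(u,0,T)| < \infty$ and $\|\psi_C(u,0,T)\| < \infty$ is immediate from \eqref{charFuncAffine2} together with the defining condition $e^{\langle u, \cdot\rangle} \in L^\infty(D)$ built into $\cU$. Theorem~\ref{thm:riccati_odes} then yields, for $t \in [0,T)$,
\[
\partial_t \psi_C(u,0,t) = R_C(u, \psi_C(u,0,t)), \qquad \partial_t \phi(u,0,t) = F(u, \psi_C(u,0,t)),
\]
together with the initial conditions $\psi_C(u,0,0) = 0$ and $\phi(u,0,0) = 0$ imposed in the definition of an affine stochastic volatility model (noting that the normalisation $\psi_C^i(u,v,0) = v$ collapses to $0$ at $v=0$). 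Integrating both ODEs from $0$ to $t$ delivers the two claimed formulas at once.

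For the Lévy-Khintchine claim, I would reason as follows. Fix $y \in C$, start the affine process at $(x,y)$, and use \eqref{charFuncAffine2} together with \eqref{eq:riccati1} to compute
\[
\lim_{t\downarrow 0} \frac{1}{t}\log \EE_{(x,y)}\!\left[e^{\langle u, X_t - x\rangle}\right] \;=\; F(u,0) + \langle R_C(u,0), y\rangle.
\]
The left-hand side is the short-time limit of the log-characteristic function of the conditional increment of a stochastically continuous Markov semigroup, and therefore admits, for every fixed $y$, a Lévy-Khintchine representation in $u$. Linearity in $y$ of the right-hand side then separates $F(u,0)$ from each component of $R_C(u,0)$, showing both are of Lévy-Khintchine form. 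Extending from the short end $v = 0$ to arbitrary admissible $v$ is achieved via the affine semiflow identity $\psi_C(u, \psi_C(u,v,s), t) = \psi_C(u, v, s+t)$, which reduces the statement at $(u,v)$ to an infinitesimal one at a shifted initial datum in the parameter.

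The principal technical obstacle lies in this second half: one must keep track of which $(u,v) \in \cU$ are admissible and verify that the limiting Lévy triplets depend affinely on the starting point $(x,y)$ (equivalently, that the admissible parameters associated with $F$ and $R_C$ satisfy the Duffie-Filipović-Schachermayer admissibility conditions). Rather than redo this analysis, I would invoke the regularity theorem for affine processes from \cite{Duffie2003}, in the sharper form suited to our setting given in \cite{keller-ressel-PhD}, to conclude that $F$ and $\langle R_C(u,\cdot), y\rangle$ are indeed of Lévy-Khintchine type on their full domain of definition.
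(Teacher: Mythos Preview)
Your proposal is correct and mirrors the paper's own proof: the integral identities are obtained by integrating the generalised Riccati ODEs of Theorem~\ref{thm:riccati_odes} from $0$ to $t$ with $v=0$, and the L\'evy-Khintchine structure of $F$ and $\langle R_C(\cdot,\cdot),y\rangle$ is deferred to \cite{keller-ressel-PhD}. The paper's argument is in fact terser than yours---it simply states that the first part ``automatically comes from the definition of generalised Riccati equations'' and that ``the second can be found in \cite{keller-ressel-PhD}''---so your additional heuristic sketch of the short-time limit and the semiflow reduction is extra motivation rather than a deviation.
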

	\begin{proof}
          While the first part automatically comes from the definition
          of generalised Riccati equations, the second can be found in
          \cite{keller-ressel-PhD}.
	\end{proof}
	
	\bigskip
	
	\begin{corollary}
          Let $ (X,Y) $ be a homogeneous affine process taking values
          in $ D = \RR^n \times C $ and assume that the finite moment
          condition $ \Esign{\exp ((1+\epsi)\norm{X_t})} < \infty $
          holds true for some $\epsi > 0$, then for $0\leq t\leq T$
          \[
            \eta_t(-\ii u,T)\define F(u,\psi_C(u,0,T-t)) + \scal
            {R_C(u,\psi_C(u,0,T-t))} {Y_t}
          \]
          defines a $ \Gamma_n $-valued semimartingale and the tuple
          $ (X,\eta) $ satisfies the conditional expectation
          condition.
	\end{corollary}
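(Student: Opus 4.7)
The plan is to verify the two requirements separately: the conditional expectation condition (which is a direct consequence of the affine property together with Proposition~\ref{prop:riccati_odes}) and the $\Gamma_n$-valued semimartingale property (which splits into semimartingality in $t$ and a pathwise construction, for each $t$, of an independent-increment process representing $\eta_t(\cdot, t+\cdot)$).

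For the conditional expectation condition I would start from~\eqref{charFuncAffine2} with $v=0$:
\[
\Econd{\exp\scal{u}{X_t}}{\cF_s} = \exp\Bigl(\phi(u,0,t-s) + \scal{u}{X_s} + \scal{\psi_C(u,0,t-s)}{Y_s}\Bigr).
\]
Proposition~\ref{prop:riccati_odes} together with a change of variable $r' = r-s$ gives
\[
\phi(u,0,t-s) = \int_{s}^{t} F\bigl(u, \psi_C(u,0,r-s)\bigr)\,dr, \qquad \psi_C(u,0,t-s) = \int_{s}^{t} R_C\bigl(u, \psi_C(u,0,r-s)\bigr)\,dr,
\]
so adding the two contributions the exponent becomes $\scal{u}{X_s} + \int_s^t \eta_s(-\ii u, r)\,dr$ by the very definition of $\eta$. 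Replacing $u$ by $-\ii u$ and invoking Remark~\ref{remark:strip} to extend from the Laplace strip to $\RR^n$ then yields~\eqref{cec}.

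For the $\Gamma_n$-valued semimartingale property I would proceed pathwise. Fixing $T$ and $u$, the map $t \mapsto F(u,\psi_C(u,0,T-t))$ is $C^1$ and hence of finite variation by Theorem~\ref{thm:riccati_odes}, and $\scal{R_C(u,\psi_C(u,0,T-t))}{Y_t}$ is a bilinear combination of a $C^1$ deterministic coefficient with the semimartingale $Y$, hence itself a c\`adl\`ag semimartingale. Freezing $t$ and $\omega$ with $y := Y_t(\omega) \in C$, the calculation above gives
\[
\int_0^T \eta_t(-\ii u, r+t)\,dr = \phi(u,0,T) + \scal{\psi_C(u,0,T)}{y},
\]
which is the log-Laplace transform of the distribution of $X_T$ under $\PP^{(0,y)}$. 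To realise this as the characteristic exponent of a process with \emph{independent} increments, the key point is that for each $r \geq 0$ the map $u \mapsto F(u,\psi_C(u,0,r)) + \scal{R_C(u,\psi_C(u,0,r))}{y}$ is itself of L\'evy--Khintchine form in $u$; granting this, one builds $Z$ as the time-inhomogeneous additive process with these instantaneous L\'evy characteristics, and the assumed finite exponential moment on $X$ transfers to $Z$ via the explicit Laplace identity, delivering $\Esign{\exp((1+\epsi)\norm{Z_T})} < \infty$.

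The delicate point---and the main obstacle---is precisely this L\'evy--Khintchine structure in $u$ of the composition $u \mapsto F(u,\psi_C(u,0,r)) + \scal{R_C(u,\psi_C(u,0,r))}{y}$. Proposition~\ref{prop:riccati_odes} tells us that $F$ and $\scal{R_C(\cdot,\cdot)}{y}$ are of LK-form in each argument separately, so substituting the Riccati solution for $v$ could a priori destroy that structure. I would handle this exactly as in \cite{Kallsen2015}, via an explicit reading of the time-dependent L\'evy triplet carried along the Riccati flow, which shows that $Z$ can be constructed through Sato's construction of additive processes. All remaining pieces are then routine consequences of the affine machinery already recorded in the paper.
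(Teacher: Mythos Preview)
Your verification of the conditional expectation condition is essentially identical to the paper's proof: start from \eqref{charFuncAffine2} with $v=0$, plug in the integral representations of $\phi$ and $\psi_C$ from Proposition~\ref{prop:riccati_odes}, and perform the change of variables $r\mapsto r-s$. The paper's proof consists of precisely this chain of equalities and nothing more.

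Where you differ is in scope rather than method. The paper treats the $\Gamma_n$-valued semimartingale property as implicit in the phrase ``follows from the previous proposition and simple algebraic operations'' and does not spell out the construction of an independent-increment process $Z$ realising $\eta_t(\cdot,t+\cdot)$. You, by contrast, take this seriously: you note the semimartingale structure in $t$ (finite variation plus $C^1$ coefficient times $Y$), identify the need for an additive-process construction, and flag the L\'evy--Khintchine structure of $u\mapsto F(u,\psi_C(u,0,r))+\scal{R_C(u,\psi_C(u,0,r))}{y}$ as the nontrivial step, deferring to \cite{Kallsen2015} for the explicit triplet computation. This is a legitimate concern and a correct way to resolve it; the paper simply does not engage with it. Your treatment is therefore strictly more detailed than the original on this point, while agreeing with it on the part the paper actually writes out.
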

	\begin{proof}
          The proof follows from the previous proposition and simple
          algebraic operations. For any $0\leq t\leq T$, we have
          \begin{align*}
            \Econd{e^{\scal{u}{X_T}}}{\cF_t} &= e^{\phi(u,0,T-t) + \langle u,X_t \rangle + \langle \psi_C(u,0,T-t),Y_t\rangle}\\
             &=e^{\scal{u}{X_t} + \int_{0}^{T-t}F(u,\psi_C(u,0,r))\,dr + \scal{\int_{0}^{T-t}R_C(u,\psi_C(u,0,r))\,dr}{Y_t}}\\
             &=e^{\scal{u}{X_t} + \int_{0}^{T-t}F(u,\psi_C(u,0,r)) + \scal{R_C(u,\psi_C(u,0,r))}{Y_t}\,dr}\\
             &=e^{\scal{u}{X_t} + \int_{t}^{T}F(u,\psi_C(u,0,r-t)) + \scal{R_C(u,\psi_C(u,0,r-t))}{Y_t}\,dr}\\
             &=e^{\scal{u}{X_t} + \int_{t}^{T}\eta_t(-\ii u, r)\,dr}.
          \end{align*}
	\end{proof}

	In interest rate theory, where affine models proved to be a
        powerful tool, Hull-White extensions is realised by making the
        drift term time dependent and plays the fundamental role of
        allowing the calibration of an initial yield curve to the
        prescribed model.  This will be the topic of the next Section.
	
	We see in the following some applications of such theory.
	\begin{example}\textbf{Deterministic term structure of forward
            characteristics: }
          Deterministic forward term structure models correspond to
          time-dependent L\'evy processes.  More precisely, let
          $ (X,\eta) $ be a tuple satisfying the conditional
          expectation condition and assume that $ \eta $ is a
          deterministic, then $X$ is an additive process and
          $\eta_t(u,T)= \eta_0(u,T)$ is of L\'evy-Khintchine form for
          every $T \geq 0$ (compare, for example, with
          Definition~\ref{def:regDecomposition}).  A particular
          example would be any time-dependent L\'evy model.
	\end{example}
	
	\begin{example}\textbf{Interest rate models: }\label{Example:IR}
          If the process $ X $ is one-dimensional, pure-drift and
          absolutely continuous with respect to Lebesgue measure, then
          we fall in the case treated in
          Corollary~\ref{cor:localIndip} and we have
          \[
            \int _t^ T \alpha_t(u,r) \, dr = - \kappa_t^{M}\left(- \ii
              \int_t^T \beta_t(u,r) \, dr\right),
          \]
          but also
          \begin{equation}\label{example:interestRate1}
            \Econd{\exp\left(- \int_t^T \eta_{s-}(u,s)\,ds\right)}{\cF_t} = \exp\left(- \int_t^T \eta_t(u,r) \, dr \right),
          \end{equation}
          from which we obtain
		$$ 
		u X_t = u X_0 - \int_0^t \eta_{s-}(u,s)\, ds.
		$$
		Equation~\eqref{example:interestRate1} is also
                well-known in interest rate theory: if we denote with
                $P(t,T)$ the price of a risk-less zero coupon bond,
                with $f(t,T)$ the forward rate yield prevailing at $t$
                for $T$ and with $r(t)$ the short time interest rate
                at $t$, then we have
		\begin{equation*}
                  P(t,T) = \Econd{e^{-\int_t^Tr(s)\,ds}}{\cF_t} = e^{-\int_t^T f(t,S-t)\,dS}
		\end{equation*}
		for $ 0 \leq t \leq T $ and $ u \in \mathbb{R} $.
                Moreover, if we assumed $M$ being a Brownian motion,
                then we would have
                $\kappa_t^M(u) = -\nicefrac{u^2}{2}$ and, thus,
		$$
		\int _t^ T \alpha_t(u,r) \, dr =
                -\dfrac{1}{2}\left(\int_t^T \beta_t(u,r) \, dr
                \right)^2,
		$$
		from which, differentiating both sides with respect to
                $T$, we obtain the well-known HJM drift condition
		$$
		\alpha_t(u,T) = - \beta_t(u,T) \int_t^T \beta_t(u,r)
                \, dr.
		$$
		Notice that $ {(\eta_{s-}(u,s))}_{s \geq 0} $ is
                linear in $ u $, since $ X $ is pure drift.
              \end{example}
	
              \section{Generalised Hull-White extension}
	
              Hull-White extension of Va\v{s}\'{\i}\v{c}ek model was
              performed adding a time-de\-pen\-dent constant drift to
              the equation for the short term interest rate $r$, in
              order to have a perfect match with the current ($t=0$)
              term structure of forward rates and, thus, to enhance
              calibration.
	
              In this case, we will take a more general approach and
              will encode the extension, represented by a L\'evy
              process, in the constant part of the affine process
              (responsible for the state-independent characteristics
              thereof). In other words, the function $F$ will become
              consequently time-inhomogeneous, thus modifying the
              forward characteristics of the process $X$.

	\begin{corollary}\label{cor:time-inhomogeneous}
          Let $ (\widetilde{X},Y) $ be a time-\emph{inhomogenous},
          homogeneous c\`adl\`ag affine process taking values in
          $ \RR^n \times C $ with time-dependent continuous
          $ T \mapsto F_T $, and assume that the finite moment
          condition
          $ \Esign{\exp ((1+\epsi)\|{\widetilde{X}_t}\|)} < \infty $
          holds true for some $\epsi > 0$, then for $0\leq t\leq T$
          \[
            \widetilde{\eta}_t(-\ii u,T)\define F_T(u,\psi_C(u,0,T-t))
            + \scal {R_C(u,\psi_C(u,0,T-t))} {Y_t}
          \]
          defines a $ \G_n $-valued semimartingale and the tuple
          $ (\widetilde{X},\eta) $ satisfies the conditional
          expectation condition.
	\end{corollary}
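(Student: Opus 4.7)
The plan is to imitate the proof of the preceding corollary (the time-homogeneous version), bearing in mind that only the constant part $F_T$ of the characteristics depends on the time parameter, while $R_C$ is unchanged. Consequently the Riccati ODE for $\psi_C$ remains homogeneous, whereas the one for $\phi$ becomes time-inhomogeneous. Repeating the Picard-type argument of \cite{keller-ressel-PhD} with $t\mapsto F_t$ as an additional continuous parameter, I would obtain that the function $\phi^{(s)}$ associated with the process started at time $s$ satisfies
\[
\phi^{(s)}(u,0,\tau) = \int_0^\tau F_{s+q}\bigl(u,\psi_C(u,0,q)\bigr)\, dq, \qquad \psi_C(u,0,\tau) = \int_0^\tau R_C\bigl(u,\psi_C(u,0,q)\bigr)\, dq,
\]
so that $\psi_C$ is literally the same object as in the homogeneous case.

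First I would verify the conditional expectation condition. Writing the affine Fourier--Laplace transform \eqref{charFuncAffine2} for the time-inhomogeneous process with $v=0$ and starting time $s=t$ yields
\[
\Econd{e^{\scal{u}{\widetilde X_T}}}{\cF_t} = \exp\Bigl(\scal{u}{\widetilde X_t} + \phi^{(t)}(u,0,T-t) + \scal{\psi_C(u,0,T-t)}{Y_t}\Bigr).
\]
Substituting the integral representation of $\phi^{(t)}$ and changing variables $r=t+q$, the sum of the two non-trivial terms on the right becomes
\[
\int_t^T \Bigl[F_r\bigl(u,\psi_C(u,0,r-t)\bigr) + \scal{R_C\bigl(u,\psi_C(u,0,r-t)\bigr)}{Y_t}\Bigr] dr = \int_t^T \widetilde{\eta}_t(-\ii u, r)\, dr,
\]
which is \eqref{cec} expressed in the variable $-\ii u$.

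Second, to show that $\widetilde{\eta}$ is a $\Gamma_n$-valued semimartingale, I would fix $(t,\omega)$ and verify that the map $(u,r)\mapsto \widetilde{\eta}_t(-\ii u, r+t)$ is continuous (by continuity of $T\mapsto F_T$ and of $\psi_C$, $R_C$) and of L\'evy--Khintchine form in $u$ for each $r$ (by Proposition~\ref{prop:riccati_odes} together with $Y_t\in C$, which preserves the L\'evy--Khintchine structure of $\scal{R_C(u,\cdot)}{Y_t}$). An additive process $Z$ realising this time-dependent L\'evy--Khintchine exponent then exists by the standard construction from time-dependent triplets, and its finite exponential moment is inherited from the assumed moment bound on $\widetilde X_t$. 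The c\`adl\`ag semimartingale property of $t\mapsto \widetilde{\eta}_t(u,T)$ follows because the only $t$-dependence is through the c\`adl\`ag semimartingale $Y_t$ and through the jointly continuous deterministic functions of $t$ given by $F_T(u,\psi_C(u,0,T-t))$ and $R_C(u,\psi_C(u,0,T-t))$.

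The main obstacle I anticipate is the first step above: rigorously extending Theorem~\ref{thm:riccati_odes} to time-inhomogeneous $F$. Under the continuity assumption on $T\mapsto F_T$ and the moment bound on $\widetilde X_t$, the Picard iteration argument of Keller-Ressel should adapt with only cosmetic changes (since $R_C$ is unaffected), but this is the one place where more than a direct quotation of the preceding results is genuinely required.
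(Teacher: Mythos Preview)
Your proposal is correct and follows precisely the route the paper intends: the paper states this corollary without an explicit proof, treating it as the immediate time-inhomogeneous analogue of the preceding corollary, and the supporting computation you carry out (integral form of $\phitilde$ with $\psitilde_C(u,v;s,t)=\psi_C(u,v,t-s)$) is exactly what the paper spells out in the paragraph following the statement. If anything, you are more careful than the paper, since you flag the one genuine technical point---extending Theorem~\ref{thm:riccati_odes} to time-dependent $F$---which the paper simply assumes by reference to \cite{Richter2017}.
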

	
	\begin{remark}
          Here time-inhomogenous, homogenous affine processes appear
          as generalisation of the approaches in \cite{Carmona2012}
          and \cite{Kallsen2015} (CNKK-approach), since we can
          calibrate a large variety of (virtually, any) initial term
          structure into $t \mapsto F_t$.
          % Ref: talk from Chemnitz
	\end{remark}
	
	\begin{remark}
          Although we are only modifying the forward characteristic
          process of $X$, the process $Y$, which is Markov in its own
          filtration, remains the same. This keeps the transformation
          simple and the processes tractable, since it does not affect
          the stochastic covariance structure.
	\end{remark}
	
	The above structure increases the calibration properties of
        the original model. In addition, since the L\'evy process is
        allowed to change over time, we could calibrate it to match
        market conditions for other instant of times (apart the
        initial time).
	
	The main consequence of having such a generalised Hull-White
        extension is that we could compensate fluctuations
        (i.e.~\emph{recalibrations}) in the original model's
        parameters with a calibration of the L\'evy process, so to
        keep the price/volatility surface unchanged. In other words,
        we could consider the parameters of the original model as
        \emph{state variables}. When this is possible, we will talk
        about a model that satisfies the \textbf{\emph{consistent
            recalibration property}}.
	
	A valid question, at this point, would be to know when this is
        possible. Are there conditions that we could impose or verify
        to make sure that such a compensating mechanism can always
        happen?
	
	Let us denote with $(\nu_t^L)_{t\geq 0}$ the L\'evy measure of
    the time-dependent L\'evy process $L$, with $p_t$ and $Z_t$
    the set of parameters and state variables belonging to the
    time-homogeneous model at time\footnote{Since parameters can
      be considered as state variables, they are allowed to change
      in time.} $t$, respectively, and with $\nu^{p_t,Z_s}$ the
    L\'evy measure that has the same expressive capability as the
    original model\footnote{Here, we mean that the price or
      volatility surface created by the model and the L\'evy
      measure should be the same.}, where we made explicit the
    dependence on the parameters $p_t$ and state variables $Z_s$,
    for $s\leq t$.
	
	\begin{proposition}
          Let us assume that the stochastic parameter process
          $(p_t)_{t\geq0}$ has trajectories, whose total variation is
          bounded by a deterministic constant, take values on the
          compact set $\Theta$ of admissible parameters. Moreover,
          assume that $p \mapsto \nu^{p, \cdot}$ is continuously
          differentiable and that $p$ remains constant whenever $Z$
          leaves a prespecified compact set $K$. Then, if for all
          $t \geq 0$ we have the non-negativity condition
          \begin{equation}\label{cond:levyMeasure}
            \nu_t^L \geq \sum_{0\leq s\leq t} \nu_t^{p_s, Z_{s^-}} - \nu_t^{p_{s^-}, Z_{s^-}},
          \end{equation}
          the consistent recalibration property holds.
	\end{proposition}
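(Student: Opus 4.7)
The plan is to construct the compensating L\'evy process $L$ explicitly from the data and then invoke Corollary~\ref{cor:time-inhomogeneous} to verify that the extended model produces the same forward characteristics as the original (time-homogeneous) affine model at every instant. My starting point is the decomposition of the trajectory $t \mapsto p_t$: since $p$ has bounded total variation and (by the context of piecewise-constant CRC models alluded to before the statement) varies only at a countable set of jump times $\{s_k\}$, together with the assumption that $p$ freezes once $Z$ exits the compact set $K$, we get that on any finite horizon $[0,t]$ only finitely many such jumps contribute. At each jump time $s_k$, the Lévy triplet of the underlying time-homogeneous affine model is switched from $(\cdot, \cdot, \nu^{p_{s_k^-}, Z_{s_k^-}})$ to $(\cdot, \cdot, \nu^{p_{s_k}, Z_{s_k^-}})$, and its other components (drift, diffusion) adjust accordingly through the Hull-White term $F_t$.

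First I would define a candidate Lévy measure for $L$ by
\[
\nu_t^L := \sum_{0\leq s\leq t}\bigl(\nu_t^{p_s, Z_{s^-}} - \nu_t^{p_{s^-}, Z_{s^-}}\bigr) + \mu_t,
\]
where $\mu_t$ is any non-negative measure (e.g.\ zero) making the whole expression non-negative. Assumption~\eqref{cond:levyMeasure} is precisely what guarantees that such a $\mu_t$ exists. The drift and diffusion parts of $L$ are fixed analogously by matching the required shift in $F_{s_k}$ against $F_{s_k^-}$. I would then check that this prescription defines a genuine Lévy measure: $\sigma$-finiteness and the $\int (1 \wedge \|x\|^2)\, \nu^L_t(dx) < \infty$ condition follow from continuous differentiability of $p \mapsto \nu^{p,\cdot}$ together with compactness of $\Theta$ and $K$, which bound both the size of the individual summands and the number of nontrivial terms contributing on $[0,t]$.

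With the Lévy characteristics of $L$ in hand, the Hull-White extension encodes $L$ into the constant part of the affine process, so that the modified generator has a time-inhomogeneous $F_T$ while the volatility part (governed by $R_C$ and $Y$) is untouched. By Corollary~\ref{cor:time-inhomogeneous}, the pair $(\widetilde X, \widetilde\eta)$ still satisfies the conditional expectation condition, so its forward characteristics are well-defined. It then remains to check the consistency claim: at any time $t$, the forward characteristic $\widetilde\eta_t(-\ii u, T) = F_T(u,\psi_C(u,0,T-t)) + \langle R_C(u,\psi_C(u,0,T-t)), Y_t\rangle$ agrees with the forward characteristic that the model with \emph{current} parameters $p_t$ (but no Hull-White extension beyond what has accumulated) would produce. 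Telescoping the jump compensations cancels exactly the discrepancies introduced by the successive parameter updates, yielding the required invariance of the price/volatility surface.

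The main obstacle, in my view, is the book-keeping for the telescoping argument in the last step: one has to track how a jump in $p$ at time $s$ propagates into $F_T$ for all later maturities $T\geq s$, and verify that the contribution of $L$ at that moment exactly absorbs the discrepancy in $\nu^{p_\cdot, Z_{\cdot}}$. The non-negativity condition \eqref{cond:levyMeasure} is doing precisely the heavy lifting here, since without it the required ``compensating'' measure could have a negative part and no genuine Lévy process would realise it. The assumption that $p$ is frozen when $Z$ leaves $K$ is what prevents this accumulated compensation from blowing up or losing integrability, closing the argument.
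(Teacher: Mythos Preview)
Your approach is essentially what the paper indicates: the paper's own proof consists of a single sentence stating that one proceeds by induction on the jump times of $p$ and then defers to \cite{Richter2017} for details, and your telescoping construction over the jump times $\{s_k\}$ is precisely that induction unwound. Two small points of care: first, bounded variation together with the freezing assumption does not give \emph{finitely} many jumps on $[0,t]$, only summable jump sizes---but that is enough, since the $C^1$ dependence of $\nu^{p,\cdot}$ on $p$ over the compacts $\Theta$ and $K$ turns the sum in \eqref{cond:levyMeasure} into an absolutely convergent one controlled by the total variation of $p$; second, your display defining $\nu_t^L$ via ``$:=$'' reads backwards, since $\nu_t^L$ is given and you are really solving for the non-negative residual $\mu_t$, which is exactly what \eqref{cond:levyMeasure} supplies.
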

	\begin{proof}
          The proof is done by induction on the jumping times of the
          parameter process $p$. For more details, see
          \cite{Richter2017}.
	\end{proof}
	
	\bigskip As already mentioned above, this add-on will
        transform the functional characteristic $F$ in a
        time-dependent function and it might be worth noticing how
        this happens in practice.
	
	Using the same notation introduced in \cite{Richter2017}, we
        can define $F_T$ as it appears in Corollary
        \ref{cor:time-inhomogeneous} adding a new time-dependent
        function $\mu$:
	\begin{definition}[$\Inc^D$]
          Let $Z$ be a generic stochastic process with values in the
          (state) space $D$, such that all increments $\Delta Z_s$
          satisfy $z + \Delta Z_s \in D$ for any $z\in D$ and any
          $s\geq 0$. We denote by $\emph{\Inc}^D$ the set which
          contains all continuous functions
          $\mu: \cU\times \RR_{\geq 0} \to \RR$ of the type
          $\mu(u,t) \define \log\EE[\exp(\scal{u}{\Delta Z_t})]$ for
          which $\mu(0,t) =0$ for all $t\geq0$.
	\end{definition}
	In other words, we are adding to the ``old'' $F$ the cumulant
        generating function of the process $(\Delta Z_s)_{s\geq0}$,
        that is $F_s(u,v) \define F(u,v) + \mu(u,v,t-s)$ for all
        $u\in\cU$ and $t\geq s \geq 0$. This will become even clearer in the
        following, when we will specify our consistent recalibration
        model.
	
	Analogously to what already done, we can define $\phitilde$
        and $\psitilde$ as the time-in\-homo\-geneous versions of
        $\phi$ and $\psi$ as solutions to the time-inhomogeneous
        version of the Riccati equations. In particular, for
        stochastic volatility affine processes, similarly to Theorem
        \ref{thm:riccati_odes}, for $s \leq t$ we can write (compare
        with \cite{Richter2017}):
	\begin{subequations}
          \begin{alignat}{2}
            &\pd{}{t}\phitilde(u,v;s,t) = F_t(u,\psitilde_C(u,v;s,t)), \qquad &&\phitilde(u,v;0,0) = 0\\
            &\pd{}{t}\psitilde_C(u,v;s,t) =
            R_C(u,\psitilde_C(u,v;s,t)), \qquad &&\psitilde_C(u,v;0,0)
            = v,
          \end{alignat}
	\end{subequations}
	with $\psitilde_C(u,v;s,t) = \psi_C(u,v;t-s)$. At this point,
        it is also possible to rewrite the expression for the
        \emph{forward characteristics} of the time-inhomogeneous
        process $\widetilde{X}$ as
	\begin{equation}
          \int_t^T \widetilde{\eta}_t(u,r)\, dr = \phitilde(\ii u,0;t,T) + \scal{\psitilde_C(\ii u,0;t,T)}{Y_t}.
	\end{equation}
	In particular, for $t=0$ we recover the characteristic
        function and we obtain
	\begin{equation*}
          \int_0^T \widetilde{\eta}_0(u,r)\, dr = \phitilde(\ii u,0;0,T) + \scal{\psitilde_C(\ii u,0;0,T)}{y}
	\end{equation*}
	and, if we denote with $C(y)$ the set of characteristic
        functions $\widetilde{\eta}_0$, we notice that for any element
        $\widetilde{\eta}_0 \in C(y)$, there exists (at least) one
        $\mu \in \Inc^D$ that defines $\widetilde{\eta}$ itself. It is
        thus possible to establish a surjective function $g$ between
        $\Inc^D$ and $C(y)$. The existence of such $g$ is equivalent
        to nothing but the Condition~\eqref{cond:levyMeasure}
        previously stated since we can consider any jump-time as an
        initial starting point for the process $(\widetilde{X},Y)$ due
        to the Markovianity of the process.
	
	\section{From Heston to Hull-White extended Bates model}\label{sec:HestonBates}
	Before introducing the consistent recalibration model more
        mathematically, let us briefly recall the Heston model, which
        is an affine stochastic volatility model, for $X = \log(S)$
        being the log-return of the underlying price
	\begin{gather}
          \begin{aligned}
            &dX(t)= \Bigl(r-q-\frac{1}{2}V(t)\Bigr) dt + \sqrt{V(t)} dW_1(t), \;\; X(0) = x_0,\\
            &dV(t)=k \left[\theta -V(t)\right]dt + \sigma \sqrt{V(t)} dW_2(t), \;\; V(0) = v_0,\\
            & dW_1(t)\, dW_2(t) = \rho\, dt, \qquad \rho \in [-1,1],
          \end{aligned}
	\end{gather}
	and where $r$ and $q$ represent the instantaneous risk-free and dividend yields respectively and are constant, $\theta > 0$ is the long-term mean of the variance, $k > 0$ is the speed of mean-reversion, $\sigma > 0$ represents the instantaneous volatility of the variance process $V$. In order to ensure positivity of the variance process, we need to satisfy $2 k \theta > \sigma^2$ (Feller condition).\\
	For $0\leq t\leq T$, we have that $\eta$ defines a
        $\G_1$-semimartingale:
	\begin{equation*}
          \eta_t(u, T) = F(\ii u, \psi_C(\ii u, 0, T-t)) + R_C(\ii u, \psi_C( \ii u, 0,T-t))\, V_t,
	\end{equation*}
	where $C$ coincides with $\RR_{>0}$ and
	\begin{subequations}
          % \vspace{-0.25cm}
          \begin{alignat*}{2}
            &F(u_1,u_2) &&= k\theta u_2 + (r-q)u_1,\\
            &R_C(u_1,u_2) &&= \frac{1}{2}u_1(u_1-1) +
            \frac{1}{2}\sigma^2u_2^2 + \sigma\rho u_1 u_2 - k u_2.
          \end{alignat*}
	\end{subequations}
	
	The Hull-White extension of the Heston model consists in a
    generalised version of the so-called Bates model in which we
    add a compensated\footnote{Compensation is necessary to have a
    martingale process, as it is often the case in the pricing
    context.} jump L\'{e}vy process $L$ with L\'{e}vy measure
    $\nu(t,dx)$ to the dynamics of the log-return $X$:
	\begin{equation*}
          dX(t)= \Bigl(r-q-\frac{1}{2}V(t)\Bigr) dt + \sqrt{V(t)} dW_1(t) + dL_t.
	\end{equation*}
	The first consequence that should appear obvious is that we
        are enriching the space of calibrated volatility surfaces,
        thanks to the L\'{e}vy process, while keeping the same
        dimensions of the state variables. Accordingly, the functional
        characteristic $F$ will then change to
	\begin{equation}
          F_t(u_1,u_2) = k\theta u_2 + (r-q)u_1 + \mu_L(u_1,u_2,t),
	\end{equation}
	where $\mu_L$ is the cumulant generating function of $L$.  As
        we will see below, we can establish a bijective relation
        between $\mu_L$ and the L\'evy measure $\nu_L$.
	
	We are talking about a generalised Bates model since the
        L\'{e}vy measure is also allowed to change in time and, as
        already said, this permits to make also other parameters time
        dependent.

	\subsection{Consistent recalibration (with words)}\label{subsec:CRCwords}
	Time is mature to explain how the generalised Bates model can
        be used as a consistent recalibration (CRC) model.  Recall
        that although formally there are only two state variables, now
        also parameters are free to change in time thanks to the
        compensation mechanism that the Hull-White extension provides.
	
	Let us start at time $t=t_0$ with a log-price $X_{t_0}$, a set
        of parameters
        $p_{t_0} = (r, q, k, \theta_{t_0}, \sigma_{t_0}, \rho_{t_0})$,
        an initial variance for the log-price $V_{t_0}$ and the
        compensated jump L\'{e}vy process $L_{t_0}$ which represents
        the Hull-White extension.  Note that some of the parameters
        are not constant, but change over time and are denoted by the
        time-index.  This particular combination of state variables
        and parameters fully specifies a particular model $\cM_1$
        among all possible models $\cM_i$ that can represent an
        implied volatility surface (IVS) without breaking any
        no-arbitrage constraints and should be able to reflect those
        market conditions that are summarised by the IVS at time
        $t_0$. It is thus natural to write
        $\text{IVS}_t = \text{IVS}(X_t,V_t; \theta_{t}, \sigma_{t},
        \rho_{t}; \{T_i\}, \{K_j\})$ for the volatility surface at
        time $t$. The model state variables $X$ and $V$ are thus able
        to evolve in time until $\cM_1$ is able to mirror the
        market. Eventually, this situation will break at time $t=t_1$
        and a new calibration will be necessary.
	\begin{enumerate}
        \item Starting from time $t_0$, $X$ and $V$ can evolve until
          $t_1 = t_0 + \Delta t$, where the new volatility surface is
          given by
          $\text{IVS}(X_{t_1},V_{t_1}; \theta_{t_0}, \sigma_{t_0},
          \rho_{t_0}; \{T_i\}, \{K_j\})$.
        \item Parameters $(\theta_{t_0}, \sigma_{t_0}, \rho_{t_0})$
          will move to another configuration
          $(\theta_{t_1}, \sigma_{t_1}, \rho_{t_1})$, but, to
          enforce a smooth change between the first configuration and
          second,
        \item Also the L\'{e}vy process will be adjusted and will
          compensate the changes in the parameters
          $(\theta_{t}, \sigma_{t}, \rho_{t})$ to reproduce the same
          IVS.
        \item In this way we can represent realistically the behaviour
          of the market.
	\end{enumerate}
	The \emph{recalibration} of the Hull-White extension is also
        preserving the drift-con\-di\-tion of the forward
        characteristics, thus ensuring that we do not violate
        no-arbitrage constraints.  The new model $\cM_2$ will then
        specify the evolution in time of the state variables until
        another recalibration will be needed.
	
	\begin{figure}%[H]
          \centering \scalebox{0.9}{%
            \begin{tikzpicture}
              \coordinate (p0) at (-2.2,-1.4); \coordinate (c) at
              (0,0); \coordinate (p1) at (-0.2,0.5); \coordinate (p2)
              at (0.8,-0.6); \coordinate (p3) at (-1.8,1.2);
              \coordinate (p4) at (0.5,0.3); \coordinate (p5) at
              (2.8,1.3);
			
              \draw (c) ellipse (4cm and 3cm);
              \path[postaction={decorate, decoration={ raise=0.3em, text along path, text={|\Large\bf|IVS models}, text align=center, }, }]
              (c) ++(220:4cm and 3cm) arc(220:340:4cm and 3cm);
			
              \fill[black] (p0) circle (0.03cm) node[below] {$t=t_0$};
              \fill[black] (p1) circle (0.03cm) node[above] {$t=t_1$};
              \fill[black] (p5) circle (0.03cm) node[above] {$t=t_2$};
			
              \draw[->] (p0) -- (p1) node[pos=.5,sloped,above]
              {$\cM_1$}; \draw[->] (p0) -- (p2)
              node[pos=.5,sloped,below] {$\cM_2$}; \draw[->] (p0) --
              (p3) node[pos=.5,sloped,above] {$\cM_i$};
			
              \draw
              [->,decorate,decoration={snake,amplitude=.4mm,segment
                length=2mm,post length=1mm}] (p1) -- (p4); \draw
              [red!50!black, -Stealth] (p4) .. controls (0.,0.)
              .. (p1); \draw[->] (p1) -- (p5)
              node[pos=.5,sloped,above] {$\cM_2$};
            \end{tikzpicture}
          }
          \label{fig:M1}
          \caption{At $t=t_1$ the change in the parameters
            $(\theta_{t_0}, \sigma_{t_0}, \rho_{t_0})$ would cause a
            jump in the volatility structure (snake arrow), but this
            is compensated by the change in the L\'{e}vy process $L$
            (red bent arrow).}
	\end{figure}
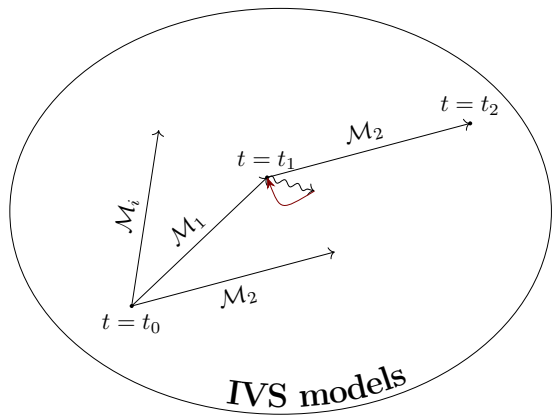
	
	Notice that the evolution of the state variables is determined
        by the ``old'' parameters
        $(\theta_{t_i}, \sigma_{t_i}, \rho_{t_i})$ on the closed
        interval $[t_i, t_{i+1}]$ and for this reason there is no
        discontinuity at $t=t_{i+1}$ in the modelled IVS caused by the
        parameters movement.
        Further, since we do not know in advance the exact 
        recalibration times $t_i$, these are random variables (more in 
        Section \ref{sec:CRC_math}).

	\section{CNKK Equation}
	\subsection{Quick heuristics}\label{subsec:heuristics}
	The mathematical formulation that is needed to describe what
        we sketched above starts from \eqref{regularDec}.  If we
        rewrite the same equation using Musiela's parametrisation,
        defining $x := T-t$, then the map becomes
        $(u, t, x) \mapsto \eta_t(u, t+x)$ in the new notation.  In
        addition, let us introduce the strongly continuous semi\-group
        $\{S(t) \,|\, t \geq 0\}$ of right shifts, such that for a
        proper function $g$ this is mapped to
        $S(t)g(u, \cdot) = g(u, t+\cdot)$.  Then we can rewrite
        Equation~\eqref{regularDec} as
	\begin{equation}
          \eta_t(u,t+x) = S(t)\eta_0(u,x) + \int_0^t\!\! S(t-s)\alpha_s(u,t+x)  ds + \sum_{i=1}^d \!\int_0^t\!\! S(t-s)\beta^i_s(u,t+x)  dM^i_s,
	\end{equation}
	which can be rewritten in terms of $\theta_t(u,x) := \eta_t(u, t + x)$ and, with abuse of notation,
        $\alpha_t(u,x) := \alpha_t(u,t+x)$,
        $\beta_t(u,x) := \beta_t(u,t+x)$ as
	\begin{equation}\label{eq:pre-CNKK}
          \theta_t(u,x) = S(t)\theta_0(u,x) + \int_0^t\! S(t-s)\alpha_s(u,x)  ds + \sum_{i=1}^d \!\int_0^t\! S(t-s)\beta^i_s(u,x)  dM^i_s.
	\end{equation}
	Finally, the passage to the limit will justify what written in the next subsection.

	\subsection{CNKK SPDE}
	The framework we will develop in the following will allow a
        thorough analyses of factor models in the CNKK-approach
        introduced in \cite{Carmona2012} and \cite{Kallsen2015}, yet
        with crucial differences. For example, as already done by
        Kallsen and Kr\"{u}hner, we assume that the volatility
        processes $\beta^i$ of the forward characteristic $\eta$ are
        functions of the present state of $\eta$ itself, i.e. for all
        $i=1, \dots, d$
	\begin{equation*}
          \beta_t^i(u, T)(\omega) = \sigma\big(t,\eta_{t^-}(\cdot,\cdot)(\omega)\big)(u,T),
	\end{equation*}
	but, since we introduce the right-shift operator, we will
        obtain an SPDE and not simply an SDE.
	
	\begin{definition}[L\'evy codebook Hilbert
          space]\label{def:HilbertCodebook}
          Let $ G $ be a Hilbert space of continuous complex-valued functions defined on the strip $ - \ii [0,1]^n \times \mathbb{R}^n $, i.e. $ G \subset C((- \ii [0,1]^n) \times \mathbb{R}^n; \mathbb{C}) $.\\
          $ H $ is called a \emph{L\'evy codebook Hilbert space} if
          $H$ is a Hilbert space of continuous functions
          $ \eta: \mathbb{R}_{\geq 0} \to G $,
          i.e.~$H \subset C(\mathbb{R}_{\geq 0};G)$ such that
          \begin{itemize}
          \item There is a continuous embedding
            $ H \subset C(\mathbb{R}_{\geq 0} \times (- \ii [0,1]^n)
            \times \mathbb{R}^n; \mathbb{C})$,
          \item The shift semigroup
            $ (S_t \eta) (u,x) := \eta(u,t+x) $ acts as strongly
            continuous semigroup of linear operators on $ H $,
          \item Continuous functions of finite activity
            L\'evy-Khintchine type
            \[
              (u, t) \mapsto \ii \scal{a(t)}{u} -
              \frac{\scal{u}{b(t)u}}{2} + \int_{\mathbb{R}^n}
              \big(\exp(\ii \scal{\xi}{u}) -1\big) \,\nu_t(d \xi)
            \]
            lie in $ H $, where $ a $, $ b $, $ \nu $ are continuous
            functions defined on $ \mathbb{R}_{\geq 0} $ taking values
            in $ \mathbb{R}^n $, the positive-semidefinite matrices on
            $ \mathbb{R}^n $ and the finite positive measures on
            $\RR^n$. This is for example the case for processes with
            independent increments and finite
            variation. %\textcolor{red}{(this corresponds to processes with independent increments and finite activity -- not with jumps of finite variation?)}.
          \end{itemize}
	\end{definition}
	
	\begin{remark}
          Notice that we do not assume that there are additional
          stochastic factors outside the considered parametrisation of
          liquid market prices.
	\end{remark}
	
	\begin{remark}
          Notice that elements of the Hilbert space $H$ are understood
          in Musiela parametrisation and therefore denoted by a
          different letter in the sequel. As already written in
          Subsection~\ref{subsec:heuristics}, we have the relationship
          $ \eta_t(u, t+x) = \theta_t(u,x) $, with $ x := T-t $.  In
          this sense, we also have the equality
          $ \theta_t(u, 0) = \kappa_t^X(u) $ for the predictable
          characteristics of $X$.
	\end{remark}

	\begin{definition}[CNKK equation]
          Let $H$ be a L\'evy codebook Hilbert space. We call the
          following stochastic partial differential equation
          \begin{equation}\label{eq:CNKK}
            d \theta_t = \big(A \theta_t + \mu_{\operatorname{CNKK}}(\theta_t)\big) dt + \sum_{i=1}^d \sigma_i(\theta_t) \, dB^i_t
          \end{equation}
          a \emph{CNKK equation} $ (\theta_0,\kappa,\sigma) $ with
          initial term structure $ \theta_0 $ and characteristics
          $ \kappa $ and $ \sigma $, if
          \begin{itemize}
          \item $ A = \frac{d}{dx} $ is the generator of the shift
            semigroup on $H$,
          \item $ \sigma_i: U \subset H \to H$, $ U $ an open subset
            of $H$, are locally Lipschitz vector fields, and
          \item $ \mu_{\operatorname{CNKK}} : U \to H $ is locally
            Lipschitz and satisfies that for all $ \eta \in \Gamma_n $
            we have
            \begin{equation}\label{eq:CNKK_drift}
              \int _0^ {T-t}\!\!\! \mu_{\operatorname{CNKK}}(\theta)(u,r) \, dr = \theta(u,0) - \kappa_{\theta}\!\left(u, - \ii \int_0^{T-t}\!\!\! \sigma(\theta)(r,u) \, dr \,; 0\!\right)\!,
            \end{equation}
            where $ {(\kappa_{\theta})}_{\theta \in U} $ is
            $ \Gamma_{n+d} $-valued for each $ \theta \in \Gamma_n $,
            such that $ \kappa_{\theta}(u,0;0) = \theta(u,0) $ and
            $ \kappa_{\theta}(0,v;0) = -\frac{\norm{v}^2}{2} $, for
            $ u \in \RR^n $, $ v \in \RR^d $.
          \end{itemize}
	\end{definition}

	\begin{remark}
          $\kappa_{\theta}$ is the forward characteristic process
          associated to the couple $(X, B)$, where $X$ is (still) the
          log-return price process and $B$ the driving process of
          $\theta$.  Moreover, Equation \eqref{eq:CNKK_drift} can be
          seen as a \emph{drift-condition} and it is analogous to
          Equation \eqref{drift-cond}, reformulated under the
          Musiela's parametrisation.
	\end{remark}
	
	\begin{remark}
          It is evident how Equations \eqref{eq:CNKK} and
          \eqref{eq:pre-CNKK} relate to each other and how the former
          can be seen as the limit case of the latter.
	\end{remark}
	
	\begin{remark}
          We do not require that all solutions of Equation
          \eqref{eq:CNKK} are $ \Gamma_n$-valued, which would be too
          strong as a condition and difficult to characterise. In
          particular, $\Gamma_n$ is a more general than what we need.
	\end{remark}

	\begin{proposition}
          Let $ \theta $ be a $ \Gamma_n $-valued solution of a CNKK
          equation and let $ X $ be a semimartingale such that the
          predictable characteristics satisfy
          \begin{equation*}
            \kappa_t^{(X,B)}(u,v) = \kappa_{\theta_t}(u,v;t)
          \end{equation*}
          for $ u \in \RR^n $, $ v \in \RR^d $ and $ t \geq 0 $, then
          the tuple $ (X,\theta) $ satisfies the conditional
          expectation condition.
	\end{proposition}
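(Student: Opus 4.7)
The plan is to reduce the proposition to Theorem~\ref{thm:inversececImpliesDC} (cec from short-end plus drift) by translating the CNKK SPDE back to the original forward-characteristic parametrisation and reading off a regular decomposition.

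First I would pass from Musiela coordinates back to the $(t,T)$-parametrisation. Using $\eta_t(u,T) = \theta_t(u,T-t)$ and writing $A = d/dx$, for fixed maturity $T$ an It\^o-type computation in the SPDE \eqref{eq:CNKK} cancels the shift term: indeed,
\begin{equation*}
d\eta_t(u,T) \;=\; d\theta_t(u,T-t) - (A\theta_t)(u,T-t)\,dt \;=\; \mu_{\operatorname{CNKK}}(\theta_t)(u,T-t)\,dt + \sum_{i=1}^d \sigma_i(\theta_t)(u,T-t)\,dB^i_t.
\end{equation*}
This gives the regular decomposition of Definition~\ref{def:regDecomposition} with $M = B$, $\alpha_t(u,T) = \mu_{\operatorname{CNKK}}(\theta_t)(u,T-t)$ and $\beta^i_t(u,T) = \sigma_i(\theta_t)(u,T-t)$. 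I would then check the integrability requirement $\bigl(\sqrt{\int_t^T \|\beta_t(u,r)\|^2\,dr}\bigr)_{t\geq 0} \in L(B)$ using the local Lipschitz property of each $\sigma_i$ on the open set $U \subset H$ together with continuity of trajectories in $H$; this is a routine verification provided one localises by stopping at exit times from balls in $U$.

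Next I would verify the two ingredients of Theorem~\ref{thm:inversececImpliesDC}. The short-end condition follows from the normalisation $\kappa_{\theta}(u,0;t) = \theta(u,0)$ built into the CNKK definition together with the hypothesis $\kappa_t^{(X,B)}(u,v) = \kappa_{\theta_t}(u,v;t)$: taking $v=0$ yields $\kappa_t^X(u) = \kappa_{\theta_t}(u,0;t) = \theta_t(u,0) = \eta_t(u,t)$. For the drift condition, substituting the identified $\alpha,\beta$ and performing the change of variables $r \mapsto r-t$ in the integrals gives
\begin{equation*}
\int_t^T \alpha_t(u,r)\,dr = \int_0^{T-t} \mu_{\operatorname{CNKK}}(\theta_t)(u,s)\,ds,\qquad \int_t^T \beta_t(u,r)\,dr = \int_0^{T-t}\sigma(\theta_t)(s,u)\,ds,
\end{equation*}
so the required identity $\int_t^T\alpha_t(u,r)\,dr = \eta_{t-}(u,t) - \kappa_t^{(X,B)}\!\left(u, -\ii\int_t^T\beta_t(u,r)\,dr\right)$ becomes exactly the CNKK drift condition \eqref{eq:CNKK_drift} evaluated at $\theta = \theta_t$, after replacing $\kappa_{\theta_t}(\cdot,\cdot;0)$ by $\kappa_{\theta_t}(\cdot,\cdot;t)$ via the hypothesis on $\kappa^{(X,B)}$.

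With both conditions in hand, an invocation of Theorem~\ref{thm:inversececImpliesDC} delivers the conditional expectation condition for $(X,\eta)$, which is the same statement as for $(X,\theta)$ after transporting through Musiela's change of variable. The main obstacle I anticipate is bookkeeping around the evaluation argument ``$;0$'' versus ``$;t$'' in $\kappa_\theta$: one has to be careful that the CNKK drift relation, stated at the ``present'' time of any state $\theta \in \Gamma_n$, is applied to $\theta = \theta_t$ and matched via the coupling hypothesis to the characteristic of $(X,B)$ at the same time $t$. Once this identification is made consistently (and the $\Gamma_n$-valuedness of $\theta$ is used to ensure $\kappa_{\theta_t}$ is well defined on the relevant strip), the remaining steps are algebraic.
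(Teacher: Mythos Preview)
Your proposal is correct and follows the same route as the paper: both reduce to Theorem~\ref{thm:inversececImpliesDC} by reading off the regular decomposition from the CNKK SPDE (after undoing Musiela's shift) and checking that the drift condition~\eqref{eq:CNKK_drift} becomes~\eqref{drift-cond} under the hypothesis $\kappa_t^{(X,B)} = \kappa_{\theta_t}$. The paper's proof is considerably terser---it asserts the drift condition ``by assumption'' and attributes the local-martingale property~\eqref{predictable-char} to the L\'evy--Khintchine structure of the Hilbert space---while you spell out the Musiela-to-$(t,T)$ translation and the short-end identification explicitly; your caution about the bookkeeping of the third argument ``$;0$'' versus ``$;t$'' in $\kappa_\theta$ is well placed, as the paper does not unwind this point either.
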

	\begin{proof}
          The proposition is basically a consequence of
          Theorem~\ref{thm:inversececImpliesDC}: the drift condition
          is satisfied by assumption and
          \begin{equation*}
            \exp \left(\ii \scal{u}{X_t} - \int_0^t \kappa_{s-}^{(X,B)}(u,v;s)\,ds \right)
          \end{equation*}
          is a (local) martingale because of the L\'evy-Khintchine
          assumption in Definition~\ref{def:HilbertCodebook} regarding
          the functions of the L\'evy codebook Hilbert space.
	\end{proof}

	\subsection{Generalisation of the HJM equation}
	
	Equation~\eqref{eq:CNKK} is very similar to the famous HJM
        equation\footnote{In the literature, this is also known as
          HJMM equation, where the last M stands for Musiela.}, but
        there are relevant differences, for example both the drift and
        drift condition are different and, what is more, functions
        have another argument (a \emph{strike} dimension) that is
        completely missing in the case of the HJM equation, where only
        a time-dimension is considered.
	
	We can construct a particular example which corresponds indeed
        to the HJM equation: let us consider a situation without
        leverage (where the Brownian motion $B$ is independent of the
        return process $X$), assuming that
	\begin{equation*}
          \kappa_\theta(u,v;0) = \theta(u,0) - \frac{\norm{v}^2}{2},
	\end{equation*}
	for $u \in \RR^n$, $v\in \RR^d$ and $t\geq 0$. Basically, we
        are looking at functions in a restricted L\'evy space, for
        which the L\'evy measure is null.  This implies that the CNKK
        equation is a parameter-dependent HJM equation. In this case,
        Condition~\eqref{eq:CNKK_drift} can be simplified to
	\begin{equation*}
          \mu_{\operatorname{CNKK}}(\theta)(u,x) = -\sum_{i=1}^d \sigma^i(\theta)(u,x) \int_0^x \sigma^i(\theta)(u,s) \,ds,
	\end{equation*}
	for $ x \geq 0 $ and $ u \in \RR^n $ (note the analogies with
        Example~\ref{Example:IR}).
	
	\begin{example}\textbf{Black-Scholes model: }
          It might be interesting at this point to see a concrete
          example coming from a simpler model.  If we consider asset
          prices described by a geometric Brownian motion
          $dS_t = S_t \sigma dW_t$, where $\sigma > 0$ is constant and
          $W$ is a standard Brownian motion, then the log-prices $X$
          are given by
          $dX_t = d\log S_t = \nicefrac{\sigma^2}{2}\,dt + \sigma
          dW_t$.  We find that
          $\eta_t(-\ii u, r) = \nicefrac{1}{2}\,\sigma^2 u(u-1) =
          F(u)$, in the notation of \eqref{eq:riccati1}. It is easy to
          see that $\eta$ is pure-drift and that the extended
          functional characteristic becomes
          $F_t = \frac{1}{2}\sigma_t^2 u(u-1) + \mu_L(u,t)$, where
          $\mu_L$ is the cumulant of the generalised Hull-White
          extension.
	\end{example}

	From what has been said so far, it is clear how the CNKK
        equation is in fact a generalisation of the HJM equation.
	\begin{remark}
          It is possible to further increase the complexity of the
          equation, for example considering options on a term
          structure. In this case, we would need another argument to
          take into account for both drift and volatility.
	\end{remark}
	
	\begin{remark}
          All these considerations are conceivable only because we are
          dealing with affine processes. In general, for a return
          price process $X$, it might not be possible to write the
          conditional expectation condition and to continue with the
          following statements.
	\end{remark}
	
	\begin{remark}
          It is possible to generalise what has been said in this
          section for processes driven by infinite dimensional
          Brownian motions, e.g.~in Equation~\eqref{eq:CNKK} we could
          replace the sum $\sum_{i=1}^d$ with $\sum_{i \in \NN}$.  The
          theory has been paved in the book by Da Prato and Zabczyk
          \cite{da_prato_zabczyk_2014}, but also Chapter 2 of
          \cite{filipovic2001consistency} by Filipovi\'c provides a
          useful and accessible introduction.  We continue considering
          the finite-dimensional case because this does not really
          create new hurdles to be solved, in contrast to the main
          obstacle, the drift $\mu_{\operatorname{CNKK}}$, for which
          no explicit expression is available.
	\end{remark}

	\section{Consistent recalibration (with maths)}\label{sec:CRC_math}
	We are now ready to face the same considerations we reported
        above in more rigorous settings. First of all, let us recap
        the most important equations. For the return process, we have
	\begin{gather}\label{eq:returnProcess}
          \begin{aligned}
            &dX(t)= \delta_t^X(X_t, V_t) \, dt + \gamma_t^X(X_t, V_t) \, dW_1(t) + dL_t, \;\; X(0) = x_0,\\
            &dV(t)= \delta_t^V(X_t, V_t) \, dt + \gamma_t^V(X_t, V_t)
            \, dW_2(t), \;\; V(0) = v_0,
          \end{aligned}
	\end{gather}
	with $dW_1(t)\, dW_2(t) = \rho_t\, dt$, for $\rho_t \in
        [-1,1]$.  Drifts and volatility coefficients are denoted by
        $\delta$ and $\gamma$ respectively and can be functions of $X$
        and $V$, e.g.~$\gamma^X(x,v) = \gamma^V(x,v) = \sqrt{v}$.
        While for the forward characteristic process, our
        \emph{codebook}, we report the CNKK SPDE
	\begin{equation}\label{eq:thetaCodebook}
          d \theta(t) = \big[A \theta(t) + \mu_{\operatorname{CNKK}}(\theta(t))\big]\, dt + \sum_{i=1}^d \sigma_i(\theta(t)) \, dB^i(t), \;\; \theta(0)=\theta_0,
	\end{equation}
	where the dependence among the Brownian motions $B^i$ with
        $i=1,\dots,d$ and these with $W_j$ for $j=1,2$ is not
        specified.  The key relation that connects the two different
        systems is given in Corollary \ref{cor:time-inhomogeneous} and
        is the following (rewritten in the Musiela notation):
	\begin{equation}\label{eq:keyRelation}
          \theta_t(-\ii u, x) = F_{t+x}(u, \psi_C(u,0;x)) + \scal{R_C(u,\psi_C(u,0;x))}{V_t}.
	\end{equation}
	Last but not least, we should also remember that parameters
        are free to move in time. As such, we consider the process
        $p$, whose dynamics are exogenously given, but which are
        confined inside the space of admissible parameters
        $\Theta \subset \RR^M$. In the example described in
        Subsection~\ref{subsec:CRCwords} it is defined as
	\begin{equation*}
          p_t = (\theta_t, \sigma_t, \rho_t), \qquad t \geq 0,
	\end{equation*}
	given the constraints of positivity and Feller condition for $\sigma_t$ and $\theta_t$ and $\rho_t \in [-1,1]$. This is the reason why we used the subscript $t$ in Equations~\eqref{eq:returnProcess} above.\\
	Let us suppose that at time $t_0$ the model can fit well the
        market surface given by observed call prices (or,
        equivalently, of implied volatility surface)
        $C_{t_0}^{\emph{obs}}(T_i,K_j)$ for $i=1,\dots,n$ and
        $j=1,\dots,m$. In this condition, the process
        $(\theta, (X,V))$ are free to evolve in time until the a new
        calibration is necessary. This is the case when
	\begin{equation}\label{eq:calibrCond}
          \Delta_{C_{t_0}} \define \sum_{i=1}^n \sum_{j=1}^m \left|C_{t_0}^{\operatorname{model}}(T_i, K_j) - C_{t_0}^{\operatorname{obs}}(T_i, K_j)\right|^2 > \epsi,
	\end{equation}
	where $C_{t}^{\emph{model}}$ is the price of a call option at time $t$ given by the model and $\epsi$ is a threshold fixed a priori.\\
	Thus, we can define the following hitting times: for $i \in
        \NN$,
	\begin{gather}%\label{eq:hittingTimes}
          \begin{aligned}
            \tau_0\,\,\, &\define \inf\left\{t > t_0 \,:\, \Delta_{C_{t}} > \epsi\right\}\\
            \tau_{i+1} &\define \inf\left\{t > \tau_i \,:\,
              \Delta_{C_{t}} > \epsi\right\}.
          \end{aligned}
	\end{gather}
	As already underlined in \cite{Kallsen2015}, the model price
        $C_{t}^{\emph{model}}$ can be expressed as a measurable
        function of the codebook $\theta$ satisfying
        Equation~\eqref{eq:thetaCodebook}. In particular, since it is
        a progressively measurable process (it is right-continuous on
        a complete probability space), we can use D\'ebut theorem,
        which tells us that the sequence $(\tau_i)_{i=\NN_0}$ is in
        fact made by \emph{stopping times}.

	\begin{remark}
          Since we are in the same framework as
          \cite{da_prato_zabczyk_2014}, and indeed we could generalise
          all results to infinite dimensional Brownian motion, it is
          worth mentioning that the solution process $\theta$
          satisfies the strong Markov property.
	\end{remark}
	
	The strictly increasing sequence $(\tau_i)_{i=\NN_0}$ is
        important since it is at these random (stopping) times that we
        have to run a \emph{new calibration} procedure for the model.
        Both the ``true'' state variables $X$ and $V$ and the
        parameter process $p$ are allowed to change in order to have
        $\Delta_{C_{t_0}}$ less than $\epsi$ again.  This is just the
        only first calibration problem we need to solve. Indeed, in
        order to compensate the changes caused by the new parameters,
        we have to modify $F_t$.  In particular, we can ca\-li\-brate
        the so-called Hull-White extension part, which enters $F_t$ as
        the cumulant generating function $\mu_L$ of the L\'evy process
        $L$.  This \emph{re-calibration} ensures that we are not
        breaking the validity of Equation~\eqref{eq:keyRelation},
        while allowing for an ``exact'' match (in the sense of having
        $\Delta_{C_{t}} < \epsi$) with the observed data.  Once
        $\mu_L$ is recovered, we are able to write down again the
        equation for the codebook $\theta$.  We can summarise this
        last passage more mathematically by introducing an operator
        $\cI$ such that
	\begin{equation}\label{eq:operatorI}
          \cI: \Theta \times \RR_+^{n+m}\to \Inc^{\RR^n\times C}, \qquad \left(p,(C^{\emph{obs}})\right) \mapsto \mu_L.
	\end{equation}
	\begin{remark}
          The cumulant generating function $\mu_L$ identifies uniquely
          $L$ if and only if the process $L$ has finite moments of
          order $n$ for all $n \in \NN$. If we denote with $\nu_L$ the
          L\'evy measure associated to $L$, then this is true if and
          only if
          \begin{equation*}
            \forall n \in \NN, \;\; \int_{\|x\|\geq 1} \|x\|^n\,\nu_L(t,dx) < \infty.
          \end{equation*}
	\end{remark}
	
	Eventually, we are now ready to give the definition of
    Consistent Recalibration (CRC) model with piecewise constant
    parameters $p$:

\begin{definition}[Consistent Recalibration Model with     Piecewise-constant $p$]\label{def:crcModel}
  Let $(\Omega, \cF, \FF, \PP)$ be a complete filtered
  probability space. The quintuple $(\theta, (X,V)$, $p$,
  $L, (\tau_i)_{i\in \NN_0})$ is called \emph{consistent
    recalibration model} for equity derivative pricing if for
  the stochastic processes $(\theta, (X,V), p)$ with values in
  $H \times (\RR^n\times C) \times \Theta$ there exists a jump
  L\'evy process $L$ (with finite moments) such that the
  following conditions are satisfied for all $n \in \NN_0$:
  \begin{enumerate}
  \item[(i)] The Hull-White extension $L$ on
    $[\tau_n, \tau_{n+1}]$ is determined by calibration to
    $\theta(\tau_n)$ through $\mu_L$:
    \begin{align*}
      &\theta(\tau_n)(u,0) = \kappa_{\theta(\tau_n)}(u,0;0),\\
      &\mu_{L(\tau_n)} = \cI\left(p(\tau_n),(X(\tau_n),V(\tau_n)),C_{\tau_n}^{\emph{obs}}\right),
    \end{align*}
    and for $t \in [\tau_n, \tau_{n+1}]$ we have
    \begin{equation*}
      L(t) = S(t-\tau_n)L(\tau_n).
    \end{equation*}
	
  \item[(ii)] The evolution of $(X,V)$ on
    $[\tau_n, \tau_{n+1}]$ corresponds to the Hull-White
    extended stochastic volatility affine model determined by
    the parameters $p(\tau_n)$ and by the process $L(\tau_n)$:
    \begin{equation*}
      (X,V)(t) = \left(X^{\tau_n, X(\tau_n)}, V^{\tau_n, V(\tau_n)}\right)(t) \;\mbox{ for }\; t \in [\tau_n, \tau_{n+1}],
    \end{equation*}
    where $(X^{s,x},V^{s,v})$ is the unique solution of the
    system of SDEs~\eqref{eq:returnProcess} on $[s,\infty)$
    with initial conditions $X(s) = x$ and $V(s)=v$ and with
    $L_t$ replaced by $L_{t-s}$. Implicitly, we also assume
    that all parameters $p$ that enters the model are
    admissible (with the usual meaning).
    			
  \item[(iii)] The evolution of $\theta$ on
    $[\tau_n, \tau_{n+1}]$ is determined by $X$ and $V$
    according to the prevailing Hull-White extended stochastic
    volatility model: for $t \in [\tau_n, \tau_{n+1}]$ and
    $x \in [0, \tau_{n+1}-\tau_n]$
    \begin{equation*}
      \theta(t)(-\ii u,x) = F_{\tau_n + x}(u, \psi_C(u,0;x)) + \scal{R_C(u,\psi_C(u,0;x))}{V(t)}.
    \end{equation*}
  \end{enumerate}
\end{definition}
	
For the processes $(X,V)$ and $\theta$, we use the same
symbols as in Equations~\eqref{eq:returnProcess} and
\eqref{eq:thetaCodebook}, with a slight abuse of notation,
since these stochastic processes evolve in the intervals
$[\tau_n, \tau_{n+1}]$ following the same dynamics, but
according to the parameters $p(\tau_n)$ and to the process
$L(\tau_n)$.
The parameters $p$ remain constant in each interval of the type $[\tau_n, \tau_{n+1})$ and, by construction, $(\theta, (X,V))$ is continuous on every stopping time $\tau_n$.\\
In this sense, any CRC model can be seen as the concatenation
of stochastic volatility affine models with static parameters.

\bigskip
\noindent We can now prove the following.
\begin{theorem}
   Let $(\theta, (X,V)$, $p$, $L, (\tau_i)_{i\in \NN_0})$ be a consistent
   recalibration model as in Definition \ref{def:crcModel} and $S=\exp(X)$ 
   the discounted price process. Then $S$ and European call 
   (resp. put) option prices $C_t(T, K)$ (resp. $P_t(T, K)$) 
   are (true) martingales on $\RR_{\geq0}$. 
   Moreover, if we denote the payoff function of a call option 
   as $V(y) = (y-K)_+$, then the following pricing formula holds:
   \begin{equation}\label{eq:callPrice}
       C_t(T,K) = \frac{e^{rt}}{2\pi}
       \int_{\ii\Im(\Tilde{z})+\infty}^{\ii\Im(\Tilde{z})-\infty}
       \Psi_T(-z) \hat{V}(z)\,dz,
   \end{equation}
   where $\tau_{n+1}> T \geq \tau_n$ for some $n \in \NN$,
   $\Tilde{z}$ belongs to the analytic strip for which we have
   finite exponential moment (cf. Remark \ref{remark:strip}), 
   $\hat{V}$ denote the Fourier transform of $V$ and $\Psi_{T|\tau_n}$ is 
   the characteristic function of $X_T$, i.e.
   \begin{equation}\label{eq:Psi}
       \Psi_{T|\tau_n}(u) = \EE\left[e^{\ii \langle u, X_T\rangle}\,|\,\cF_{\tau_n}\right] = e^{\ii\langle u, X_{\tau_n}\rangle + \int_0^T \theta_{\tau_n}(u, r)\,dr}.
   \end{equation}
\end{theorem}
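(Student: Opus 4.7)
The plan is to exploit the piecewise structure built into Definition \ref{def:crcModel}: on every stochastic interval $[\tau_n, \tau_{n+1}]$, the quadruple $(\theta, X, V, L)$ coincides with a Hull-White extended stochastic volatility affine model with frozen parameters $p(\tau_n)$, so Corollary \ref{cor:time-inhomogeneous} applies and yields the conditional expectation condition in closed form via $\phitilde$ and $\psitilde_C$. The key structural ingredient is the martingale constraint of Remark \ref{remarkXi}, namely $\theta(\tau_n)(-\ii,\cdot)\equiv 0$; I would verify that this constraint is imposed by the calibration operator $\cI$ at every jump time and then preserved in the interior of each interval by property (iii) of the CRC definition.

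For the martingality of $S=\exp(X)$, I would fix $n$, condition on $\cF_{\tau_n}$, and evaluate the conditional expectation condition \eqref{cec} at $u=-\ii$, which is admissible by the finite exponential moment hypothesis (Remark \ref{remark:strip}). This gives
\[
    \Econd{\exp(X_t - X_s)}{\cF_s} \;=\; \exp\!\left(\int_s^t \theta_s(-\ii,\, r-s)\, dr\right) \;=\; 1,
\]
for $\tau_n \leq s \leq t \leq \tau_{n+1}$, so $S$ is a martingale on each segment. Since $(X,V)$ is continuous at each $\tau_n$ by construction and the parameters are piecewise constant, an optional sampling / concatenation argument glues the segments into a martingale on $\RR_{\geq 0}$; the upgrade from local to true martingale is immediate from $\EE[\exp((1+\epsi)\|X_T\|)] < \infty$. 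The call price $C_t(T,K)$ is then defined as $\Econd{(S_T-K)_+}{\cF_t}$, which is in $L^1$ by the same exponential moment bound and a martingale by the tower property; puts are handled identically (or by put-call parity).

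For the pricing formula \eqref{eq:callPrice}, the strong Markov property (valid because $\theta$ solves an SPDE in the Da Prato--Zabczyk framework) allows me to restart at $\tau_n$, so on $\{\tau_n \leq T < \tau_{n+1}\}$ the conditional characteristic function $\Psi_{T|\tau_n}$ is exactly the expression displayed in \eqref{eq:Psi}. The finite exponential moment condition of Definition \ref{def:HilbertCodebook} guarantees that $\Psi_{T|\tau_n}$ extends holomorphically to the complex strip containing $\ii\,\Im(\tilde{z})$, while the dampened payoff $V(y) = (y-K)_+$ has Fourier transform $\hat{V}$ analytic in a dual strip. Applying Parseval's identity on the shifted contour $\{\ii\,\Im(\tilde{z}) + \RR\}$ and using Fubini (justified by absolute integrability on that contour) then converts $\Econd{V(X_T)}{\cF_{\tau_n}}$ into the claimed contour integral, and multiplying by the discount factor yields the formula.

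The hard part will be step one: showing that the calibration map $\cI$ of \eqref{eq:operatorI} indeed produces a L\'evy cumulant $\mu_L$ compatible with $\theta(\tau_n)(-\ii,\cdot)\equiv 0$ at every recalibration time. Jumps of $p$ would, in the absence of the Hull-White compensation, create jumps in $\theta(\cdot)(-\ii,\cdot)$ and destroy the martingale property of $S$; so one must check that the non-negativity requirement \eqref{cond:levyMeasure} on $\nu_t^L$ is satisfied along the realised trajectory of $p$, and that $\tau_n \to \infty$ almost surely so that no accumulation of recalibrations obstructs the martingale glueing. Everything else—Fourier pricing, tower property, analytic continuation—is standard once this consistency point is in hand.
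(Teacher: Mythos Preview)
Your approach matches the paper's: piecewise martingality of $S$ on each $[\tau_n,\tau_{n+1}]$ via the conditional expectation condition evaluated at $u=-\ii$, continuous concatenation to extend to $\RR_{\geq0}$, the tower property for $C_t(T,K)$ and $P_t(T,K)$, and Carr--Madan Fourier inversion together with Corollary~\ref{cor:time-inhomogeneous} (in Musiela parametrisation) for \eqref{eq:callPrice}--\eqref{eq:Psi}. The concerns you flag in your last paragraph---that $\cI$ must produce a cumulant compatible with $\theta(\tau_n)(-\ii,\cdot)\equiv 0$ and that $\tau_n\to\infty$---are not proved in the paper either: the martingale constraint is absorbed into the standing tacit assumption of Remark~\ref{remarkXi} (reinforced by the requirement in Definition~\ref{def:crcModel} that $L$ be a \emph{compensated} jump L\'evy process), and non-accumulation of recalibration times is simply not discussed, so you are not expected to establish them as part of this proof.
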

\begin{proof}
    From point \emph{(ii)} of Definition \ref{def:crcModel} we have 
    the couple $(X,V)$ is a stochastic volatility model on each interval
    of the type $[\tau_n, \tau_{n+1}]$, from which martingality of $S$ follows 
    by taking the conditional expectation and noting that, by Definition
    \ref{def:HilbertCodebook}, we automatically have $\theta(0,x) = 0$ 
    for any $x \geq 0$. We can further extend this result
    on $\RR_{\geq 0}$ because, by our own construction, the
    concatenation of the entire process is made in a continuous way.\\
    Once it is established that $S$ is a martingale, the same
    property follows for European call and put options by use of
    the tower property of the conditional expectation.\\
    The pricing formula \eqref{eq:callPrice} comes from Fourier pricing,
    which was initially introduced by Carr and Madan in \cite{CarrMadan},
    while \eqref{eq:Psi} comes from Corollary \ref{cor:time-inhomogeneous}
    where the characteristic process has been expressed in Musiela
    notation.
\end{proof}

\begin{remark}
      Having shown that the discounted price process $S$ is a martingale,
      we automatically rule out arbitrage possibilities for the so-called
      consistent recalibration models introduced in 
      Definition \ref{def:crcModel}.
\end{remark}

\begin{remark}
      Note that the from \eqref{eq:Psi} we see that the characteristic
      process $\theta$ encodes information on the conditional expectation
      of $X$, which is equivalent, as already noted in \cite{Richter2017}
      to the knowledge of the entire derivative-price surface, thanks
      to Breeden–Litzenberger formulas.
\end{remark}

\begin{remark}
      Equation \eqref{eq:callPrice} can be generalised to other
      payoff function and is usually enriched with a dampening factor
      which is introduced to exploit numerical algorithm (\cite{CarrMadan}).
      In our case, we will use Fourier-pricing techniques in order to
      obtain the dataset for a supervised learning algorithm.
\end{remark}

\subsection{Numerical considerations}
There are practical remarks that we should consider when
    dealing with CRC models as defined above in a numerical
    framework.

\begin{itemize}
    \item \textbf{\emph{Simulations}}: As already mentioned in
      \cite{paperYieldCurve2018}, it is worth noting that
      simulating $(X,V)$ is much easier than simulating the HJM
      codebook, that is $\theta$, in particular when this is
      infinite dimensional, as it could be the case also here. In
      fact, with the approach we are outlining, we do not need to
      simulate anything from any infinite dimensional
      distribution.
	
    \item \textbf{\emph{Drift term}}: Even if we wanted to
      simulate $\theta$ solving the SPDE~\eqref{eq:thetaCodebook},
      we should be able to write down explicitly the drift term
      $\mu_{\operatorname{CNKK}}(\theta)$, but, apart from some
      degenerate cases, this is not possible. The only way to
      overcome this chasm is acknowledging
      Equation~\eqref{eq:keyRelation} as a key relation for the
      entire construction.
	
    \item \textbf{\emph{Process}} \boldsymbol{$p$}: If we assume
      that a piecewise process for the parameters $p$ is given (or
      obtained through calibration), then CRC models can be
      simulated following steps $(i)$ to $(iii)$ in
      Definition~\ref{def:crcModel}.
	
    \item \textbf{\emph{Operator}} \boldsymbol{$\cI$}: Last but
      not least, we have not specified precisely how the operator
      $\cI$ is acting. For the moment, we will consider it as an
      abstract operator. This is anyway of great relevance because
      once we are able to recover $L$ (or, alternatively,
      $\mu_L$), we can obtain $\theta$ through
      Equation~\eqref{eq:keyRelation}. Otherwise speaking, we
      could solve SPDE~\eqref{eq:thetaCodebook}.
\end{itemize}

\section{Deep calibration}\label{sec:deepCal}

\subsection{An ill-posed inverse problem}

If we look more closely to steps $(i)-(iii)$ of
    Definition~\ref{def:crcModel}, it is possible to realise that
    the more complex aspect is given by the application of the
    operator $\cI$.  This is basically a calibration conditioned
    on some (new) parameters and state variables whose complexity
    depends on the distribution of the L\'evy process $L$.  In
    general, this is not a trivial operation, since it consists in
    solving an \emph{inverse problem} that is \emph{ill-posed} in
    the sense of Hadamard even for the easiest cases (e.g.~Bates
    model).  The inverse problem is ill-posed because of an
    identifiability issue, which means that the information coming
    from market data is insufficient to exactly identify the
    parameters.  If we express the quantity $\Delta C_t$ of
    Equation~\eqref{eq:calibrCond} as a function of the model
    parameters $\vartheta$, that is
\begin{equation*}
      \Delta C_t(\vartheta) = \sum_{i=1}^n \sum_{j=1}^m \left|C_{t}^{\operatorname{model}}(\vartheta; T_i, K_j) - C_{t}^{\operatorname{obs}}(T_i, K_j)\right|^2,
\end{equation*}
we can write the identifiability problem as the fact that the
    function $\Delta C_t(\vartheta)$ has many local minima.
    Furthermore, it is in general unclear whether these minima can
    be reached by the adopted algorithm.  For example, Cont and
    Tankov show in \cite{ContTankov2004} that if one had available
    a set of call options prices (or, equivalently, implied
    volatilities) for \emph{all} strikes (in a given time
    interval!) and a \emph{single} maturity, then it would be
    possible to deduce all the parameters of the model and, in
    particular, the L\'evy triplet.  But in reality this is never
    the case, since we only know prices for a finite number of
    strikes and, in addition, we also have observational errors in
    the data.  As a result, we have a serious identification
    problem, exemplified by the fact that we can obtain the same
    prices for (infinitely) many combinations of the parameters.
The strategy which they begin to develop in \cite{ContTankov2004} and complete in \cite{ContTankov2005} is the use of the Kullback-Leibler divergence, also called relative entropy, as a regulariser in order to get a well-posed inverse problem.\\
To overcome this issue, we decided to follow a different
    strategy, making use of the implicit regularisation present in
    neural networks.

	\subsection{Learning the inverse map}
	
	Andres Hernandez was among the first who tried neural networks
        (NN) to address calibration tasks in Finance.  In
        \cite{Hernandez1}, he showed that a feedforward NN can
        actually approximate the inverse map given by the pricing
        formula and obtain the two parameters of the Hull-White
        interest rate model $(a,\sigma)$ as output of a NN.  Just as a
        reminder, the Hull-White model consists of the following SDE
	\begin{equation*}
          dr(t) = [\beta(t) - ar(t)]\,dt + \sigma\,dW(t),
	\end{equation*}
	where $a$, $\sigma>0$ and $\beta(t)$ is uniquely determined by
        the term structure\footnote{This curve is calibrated through
          the derivative of the instantaneous forward rate $f(t,T)$ at
          time $t=0$, i.e.
          $\beta(t) = \frac{\partial f(0,t)}{\partial T} + a f(0,t) +
          \frac{\sigma^2}{2a}\left(1 - e^{-2at}\right)$ once the other
          two parameter $a$ and $\sigma$ have been calibrated.}.  The
        greatest achievement that he highlighted in the paper is the
        possibility of replacing the traditional ``slow'' and
        cumbersome calibration procedure with a new straightforward
        deterministic map, which makes calibration itself a very
        efficient task, since the core of all calculations is offset
        to the training phase.  In fact, once a NN is trained, its
        application is extremely cheap from a computational point of
        view, being the most expensive operations simple matrix vector
        multiplications.

	Despite the good results obtained by Hernandez, learning the
        map from the prices to the parameters can be critical, since
        this map is not known in explicit form.  In principle, we do
        not even know if the universal approximation theorem could be
        applied, because the direct map could not be bijective (thus
        having a discontinuous inverse function).  In general, since
        the inverse map is not known, we lack control on it and it
        might well be that a NN learns appropriately the map on the
        given training sample, but is not able to generalise on
        out-of-sample data.  This is actually what happened when we
        tried to apply this approach to our problem since our
        situation is considerably more involved than
        Hernandez'. \hfill \break

	For this reason, it is a better idea to learn the direct (or
        forward) map from the parameters to the prices/implied
        volatilities.  This is done, for example, by Horvath and
        coauthors in \cite{Horvath2019}, where they implemented a
        feedforward NN to directly get the volatilities from the model
        parameters.  Note that for the training of the networks, the
        data are artificially generated and the grid of strikes and
        maturities is fixed at the beginning.  Again, the most
        appealing advantage they see in the application of NN is the
        possibility of enabling live calibration of derivative
        instruments, since the application of the NN itself only
        requires milliseconds avoiding the traditional bottleneck of
        calibration.  This allows making use of new models that were
        before considered too computationally expansive for use, such
        as the rough volatility models (e.g.~rough Heston or rough
        Bergomi model, which require Monte Carlo algorithms).

	Despite the encouraging results found in \cite{Horvath2019},
        there are still some inconveniences using this last approach.
        If on one hand the advantages on the speed side are evident,
        on the other there are still downsides that are relevant, but
        not addressed by this kind of solution.  Indeed, problems
        might come by the second step of this procedure, as denoted in
        \cite{Horvath2019}, which is the real calibration.  Even if we
        have to face a deterministic optimisation problem, this might
        not be as easy as it seems, in particular for multidimensional
        models.
	In these cases, we might need to use a local optimiser to speed up, which usually requires prior knowledge about the solution, or a global optimiser, which might take longer time.\\
	
	To overcome these issues, we propose a new method that allows
        learning the inverse map, as already done by Hernandez, even
        for ill-posed problems.  It is our wish to underline that the
        same idea could be used to learn the inverse map and solve
        inverse problems in fields other than mathematical finance.
        Adopting the same approach as Hernandez did not work out in
        our case, because of identifiability issues: different
        combinations of parameters in the stochastic volatility
        Hull-White extended affine model result in the same volatility
        surface.  Thus, our idea is allowing a neural network to
        decide autonomously which parameters giving as output knowing
        that these parameters will then have to give rise to the
        prescribed volatility surface.  In this way, we will also be
        able to learn the operator $\cI$ defined in
        \eqref{eq:operatorI}.
	
	In order to make the system works, we need two neural
        networks.  The first, denoted in the following as $\nn_1$, is
        a map between parameters and volatilities (basically, the
        usual pricing function, as learnt in \cite{Horvath2019}),
        while the second, called $\nn_2$, maps volatilities to
        parameters (but is not trained in the usual way); in our case,
        to the parameters defining the L\'evy process $L$.  Then, we
        compose the two networks, where the first is trained, while
        the second is not, to obtain a new neural network $\nn_3$
        which receives in input volatilities and returns the (same)
        volatilities:
	\begin{equation*}
          \nn_3 \define \nn_2 \circ \nn_1.
	\end{equation*}
	In other words, $\nn_3$ will learn the identity and, during
        the training phase, $\nn_2$ will get trained.  In this
        respect, we can see $\nn_2$ as the \emph{inverse} neural
        network of $\nn_1$.  The trick is as simple as that.  However,
        notice that we might not recover exactly the same parameters
        that gave birth to the original IVS, but an
        \emph{equivalent}\footnote{We could think of the combinations
          selected by the neural network as the representatives for an
          equivalence class, where all members originate the same
          implied volatility surface.} combination that resulted in
        the same surface through $\nn_1$.
	
	\begin{figure}
          \centering
          \includegraphics[width=1.\textwidth]{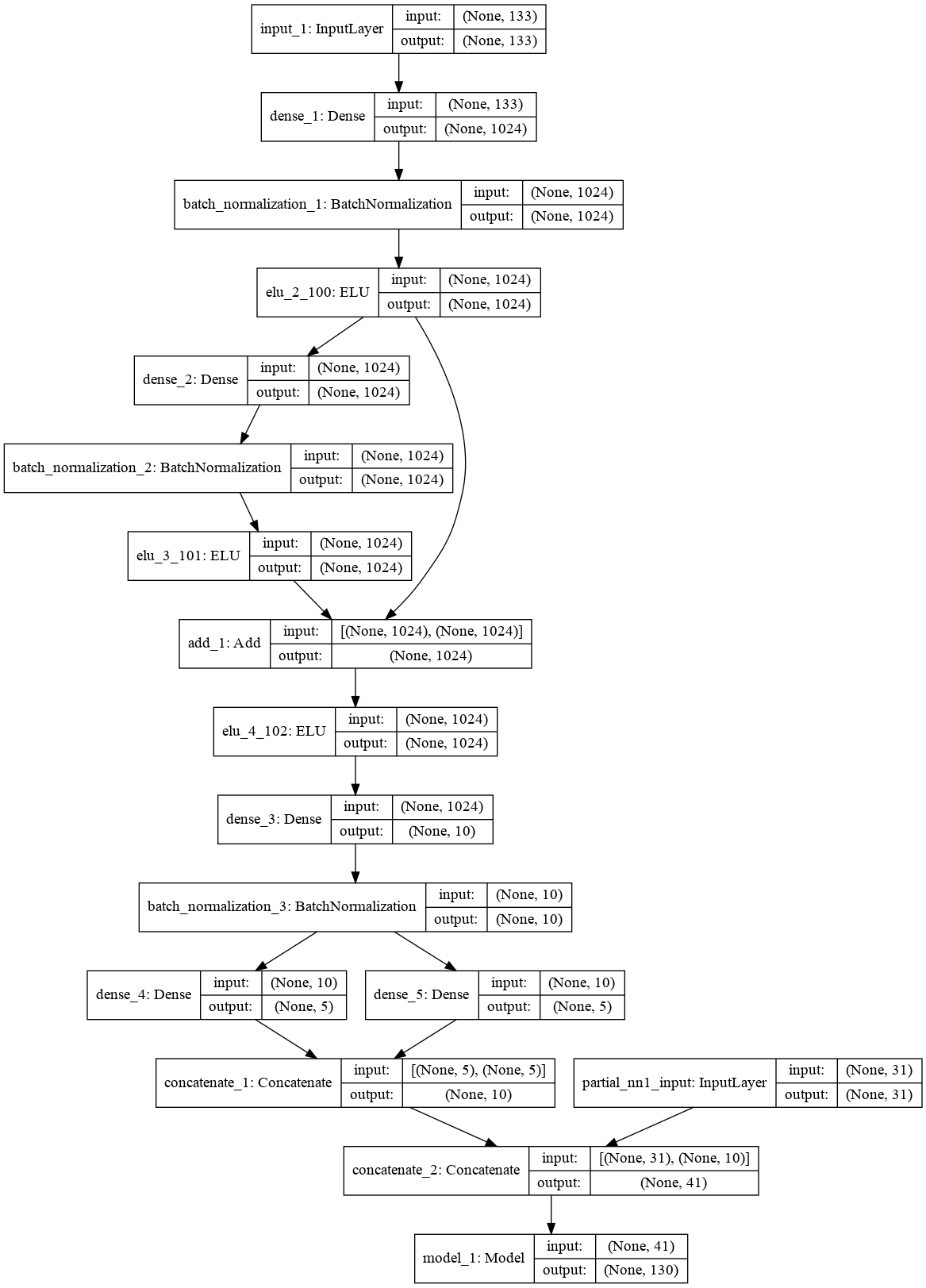}
          \caption{\small Keras representation of $\nn_3$ (with just 2
            ``main'' hidden layers instead of 4 for $\nn_2$ to fit the
            picture in the page). $\nn_1$ is summarised as model\_1 at
            the very bottom. Between the 2 wanted hidden layers elu\_2
            and elu\_4, it is possible to find another layer (in this
            sense the nomenclature \emph{1-cell} that we used in the
            text).}
          \label{fig:kerasNN3}
	\end{figure}

	\subsection{Numerical implementation}
	
	In broad terms, the numerical implementation follows the model
        outlined in Section \ref{sec:HestonBates}, where the process
        $L$ is a compensated compound Poisson process in which the
        size of jumps is normally distributed.  Since the parameters
        are allowed to change in time, the mean and variance of the
        Gaussian distribution for the jump-size of $L$ are
        time-dependent, while the Poisson rate is considered constant
        in time.  The same holds true for the other parameters
        belonging to the ``Heston'' part, with the exception of the
        interest rate $r$, the dividend rate $q$ and $k$, which is the
        speed of mean reversion for the variance process.  We decided
        to free ourselves from a static framework and to have a
        variable maturities-strikes grid.  More precisely, we used ten
        different time-to-maturities $\{\tau_i\}_{i=1}^{10}$, with
        $\tau_1 < \dots < \tau_{10}$ ranging between 7 and 440 days
        (extremes included), being more concentrated for short
        maturities, and thirteen different moneyness
        $m_1 < \dots < m_{13}$ ranging between 0.8 and 1.2 (extremes
        included) in strictly increasing order.  One difference with
        the generalised Bates model described in
        Section~\ref{sec:HestonBates} is that we have a
        maturity-dependent jump distribution: mean and variance depend
        on the maturity in the sense that they are piecewise constant
        within two adjacent time-to-maturities, therefore, the process
        $L$ is here modelled through 11 parameters: the Poisson rate
        and then 5 tuples mean-variance for the normal distributions.
	
	The first neural network, i.e.~$\nn_1$, is a
        1-cell\footnote{With this, we mean that between the predefined
          hidden layers we only find one time the application of the
          activation function (basically, another hidden layer). The
          situation might be clearer by looking at
          Figure~\ref{fig:kerasNN3}.} residual feedforward NN (see
        \cite{ResNet2015} for more information on ResNets) composed by
        4 ``main'' hidden layers with 1024 nodes each. The input layer
        has dimension 41 and includes
	\begin{equation*}
          r, q, \{\tau_i\}_{i=1}^{10}, \{m_i\}_{i=1}^{13}, v_0, k, \theta, \sigma, \rho, \lambda, \{\nu_i\}_{i=1}^{5}, \{\delta_i\}_{i=1}^{5},
	\end{equation*}
	where $\nu_i$ and $\delta_i$ are the mean and standard
        deviations of the normal distributions for the jump size.  The
        output layer has dimension 130 and includes the entire
        point-valued volatility surface, denoted as
        $$ \{\text{IVS}_i\}_{i=1}^{130}.
	$$
	The activation function used for all layers (apart from the
        output layer) is ELU.  This network is trained first with
        artificially generated data: all parameters are sampled from
        uniform distributions whose extremes (parameters) are defined
        a priori (and are kept fixed throughout the process).  Then,
        QuantLib Python routines (see \cite{Ame2003}) are used to
        obtain in an efficient and fast way all the necessary prices.
        Implied volatilities are then retrieved through the algorithm
        outlined by Fabien Le Floc’h in
        \url{http://chasethedevil.github.io/post/implied-volatility-from-black-scholes-price/}
        and implemented in Python.
	
	Second, $\nn_2$ is created, but not (immediately) trained. As
        already explained, this second neural network will be trained
        only after being composed with the trained $\nn_1$, which will
        be marked as \emph{non-trainable} in this second phase.  This
        composed NN is called $\nn_3$.  In order to learn the operator
        $\cI$, $\nn_3$ will be trained and, as a side result, $\nn_2$
        will be also trained.  That is to say that $\nn_3$ is just
        used as a mere tool to arrive to get $\nn_2$ trained as
        well. Finally, it will be then separated from $\nn_1$.  As
        already said, the goal of $\nn_3$ is basically learning the
        identity function.  Thus, the input of $\nn_2$ is the
        following:
	\begin{equation*}
          \theta, \sigma, \rho, \{\text{IVS}_i\}_{i=1}^{130},
	\end{equation*}
	while the output, since we have to learn the L\'evy process
        $L$, is
	\begin{equation*}
          \{\nu_i\}_{i=1}^{5}, \{\delta_i\}_{i=1}^{5}.
	\end{equation*}
	From an architectural viewpoint, $\nn_2$ has 4 ``main'' hidden
        layers with 1024 nodes each.  The activation function is ELU,
        as for $\nn_1$.  While training $\nn_3$ (and, implicitly,
        $\nn_2$), we have to provide as output the entire implied
        volatility structure $\{\text{IVS}_i\}_{i=1}^{130}$, while as
        input the concatenation of the complete input of $\nn_2$, plus
        the incomplete input of $\nn_1$, that is everything listed
        above apart from
        $\{\nu_i\}_{i=1}^{5}, \{\delta_i\}_{i=1}^{5}$, which have to
        be guessed during the training.  To obtain a satisfactory
        training procedure, we tried also different activation
        functions for the output layer of $\nn_2$.  In the end, the
        best results were reached using the standard sigmoid function
        stretched to completely cover the intervals\footnote{Instead
          of the interval $[0,1]$ which represents the codomain of the
          function.} in which $\nu_i$ and $\delta_i$ were (randomly)
        extracted.
	Without this precaution the training process could not converge to a reliable result.\\
	For both training processes, we used the mean squared error on
        the implied volatilities as loss function, since we were
        dealing with regression-type tasks (other loss functions were
        tried, but they gave birth to NNs that were operating more
        poorly).  To obtain better results, it was really helpful also
        the linear transformation operated on the input and output
        data: outside of the implied volatilities which were kept
        unchanged, all other quantities were scaled to reside in the
        interval $(0,1)$.  Moreover, as it is possible to see from
        Figure~\ref{fig:kerasNN3}, we made use of batch normalisation,
        while we avoided drop-out.  The best batch size for both
        training processes was 1'000 (out of a database made of
        around 600'000 elements).
	All hyperparameters have been selected after tuning the networks, using not only manual adjustments, but also other techniques like \emph{randomised search}.\\
	The whole neural network architecture was developed using
    Keras. A schematic representation can be found in
    Figure~\ref{figure:NN}.
    
    \medskip
    The interested reader who would like to grasp the approximations
    capabili\-ties of the implemented neural networks is addressed to
    Appendix \ref{sec:plotsNNs}.

{\centering

\begin{sidewaysfigure}[p]
      \tikzset{%
        every neuron/.style={ circle, draw, minimum
          size=0.8cm }, neuron missing/.style={ draw=none,
          scale=2, text height=0.333cm, execute at begin
          node=\color{black}$\vdots$ } }

      \begin{tikzpicture}[x=1.5cm, y=1.5cm, >=stealth]
        % Here I define the nodes of the input layer and I
        % draw them
        \foreach \m/\l [count=\y] in {1,2,3,4,missing,5}
        \node [every neuron/.try, neuron \m/.try]
        (input1-\m) at (0,2.5-\y) {};

        % Here I define the nodes of the hidden layer and
        % I draw them
        \foreach \m [count=\y] in {1,missing,2} \node
        [every neuron/.try, neuron \m/.try ] (hidden1-\m)
        at (1.2,1.5-\y*1.2) {};

        % Here the same as before, but for the hidden2
        % layer
        \foreach \m [count=\y] in {1,missing,2} \node
        [every neuron/.try, neuron \m/.try ] (hidden2-\m)
        at (2.4,1.5-\y*1.2) {};

        \foreach \l [count=\i] in {1,2,3,4,n} \draw [<-]
        (input1-\i) -- ++(-1,0) node [above, midway] {};

        \foreach \l [count=\i] in {1,n} \draw [->]
        (hidden2-\i) -- ++(1,0) node [above, midway] {};

        \foreach \i in {1,...,5} \foreach \j in {1,...,2}
        \draw [->] (input1-\i) -- (hidden1-\j);

        \foreach \i in {1,...,2} \foreach \j in {1,...,2}
        \draw [->] (hidden1-\i) -- (hidden2-\j);

        \node [neuron missing/.try] (miss1) at (3.6,
        1.5-2*1.2) {};

        % Output layer of NN_2
        \foreach \m/\l [count=\y] in {3,missing,4} \node
        [every neuron/.try, neuron \m/.try] (output1-\m)
        at (4.8,1.5-\y*1.2) {};

        \foreach \l [count=\i] in {3,4} \draw [<-]
        (output1-\l) -- ++(-1,0) node [above, midway] {};

        % + incomplete input of NN_1
        \foreach \m/\l [count=\y] in {1,2,missing} \node
        [every neuron/.try, neuron \m/.try] (input2-\m) at
        (4.8,5.1-\y*1.2) {};

        \foreach \l [count=\i] in {1,2} \draw [<-]
        (input2-\i) -- ++(-1,0) node [above, midway] {};

        \foreach \m [count=\y] in {1,2,missing,3,4} \node
        [every neuron/.try, neuron \m/.try ] (hidden3-\m)
        at (6,4.5-\y*1.2) {};

        \node [neuron missing/.try] (miss2) at (7.2,
        4.5-3*1.2) {};

        \foreach \i in {3,...,4} \foreach \j in {1,...,4}
        \draw [->] (output1-\i) -- (hidden3-\j); \foreach
        \i in {1,...,2} \foreach \j in {1,...,4} \draw
        [->] (input2-\i) -- (hidden3-\j);

        \foreach \l [count=\i] in {1,2,3,n} \draw [->]
        (hidden3-\i) -- ++(1,0) node [above, midway] {};

        % Hidden layer of NN_2
        \foreach \m/\l [count=\y] in {1,2,missing,3,4}
        \node [every neuron/.try, neuron \m/.try]
        (hidden4-\m) at (8.4,4.5-\y*1.2) {};

        \foreach \l [count=\i] in {1,2,3,4} \draw [<-]
        (hidden4-\l) -- ++(-1,0) node [above, midway] {};

        % Output layer of NN_3
        \foreach \m/\l [count=\y] in {1,2,missing,3} \node
        [every neuron/.try, neuron \m/.try] (output2-\m)
        at (9.6,4.-\y*1.2) {};

        \foreach \i in {1,...,4} \foreach \j in {1,...,3}
        \draw [->] (hidden4-\i) -- (output2-\j);

        \foreach \l [count=\i] in {1,2} \draw [->]
        (output2-\i) -- ++(1,0) node [above, midway]
        {$\text{IVS}_\l$}; \draw [->] (output2-3) --
        ++(1,0) node [above, midway] {$\text{IVS}_{130}$};

        % Curly brackets
        \draw
        [decorate,decoration={brace,amplitude=10pt},xshift=4pt,yshift=0pt]
        (3.5,1.) -- (3.5,4.0) node
        % [black,midway,xshift=-0.6cm]
        [black, midway, align=center, xshift=-1.cm]
        {\footnotesize Partial\\\footnotesize input
          $\nn_1$};

        \draw[decoration={brace,mirror,raise=2pt,amplitude=10pt},decorate]
        (-0.5,-4) -- node[below=18pt] {$\nn_2$} (5.2,-4);

        \draw[decoration={brace,raise=-2pt,amplitude=10pt},decorate]
        (4.4,4.5) -- node[above=14pt] {$\nn_1$} (10.,4.5);

        % Labels
        \node[] at (4.2,4.1) {$r$}; \node[] at (4.2,2.9)
        {$q$}; \node[] at (4.2,0.5) {$\nu_1$}; \node[] at
        (4.2,-1.9) {$\delta_5$}; \node[] at (-0.6,1.7)
        {$\theta$}; \node[] at (-0.6,0.7) {$\sigma$};
        \node[] at (-0.6,-0.3) {$\rho$}; \node[] at
        (-0.6,-1.3) {$\text{IVS}_{1}$}; \node[] at
        (-0.6,-3.3) {$\text{IVS}_{130}$};

      \end{tikzpicture}
      \caption{Representation of $\nn_3$ as a
        fully-connected feedforward neural network
        (residual cells are ignored).}
      \label{figure:NN}
\end{sidewaysfigure}
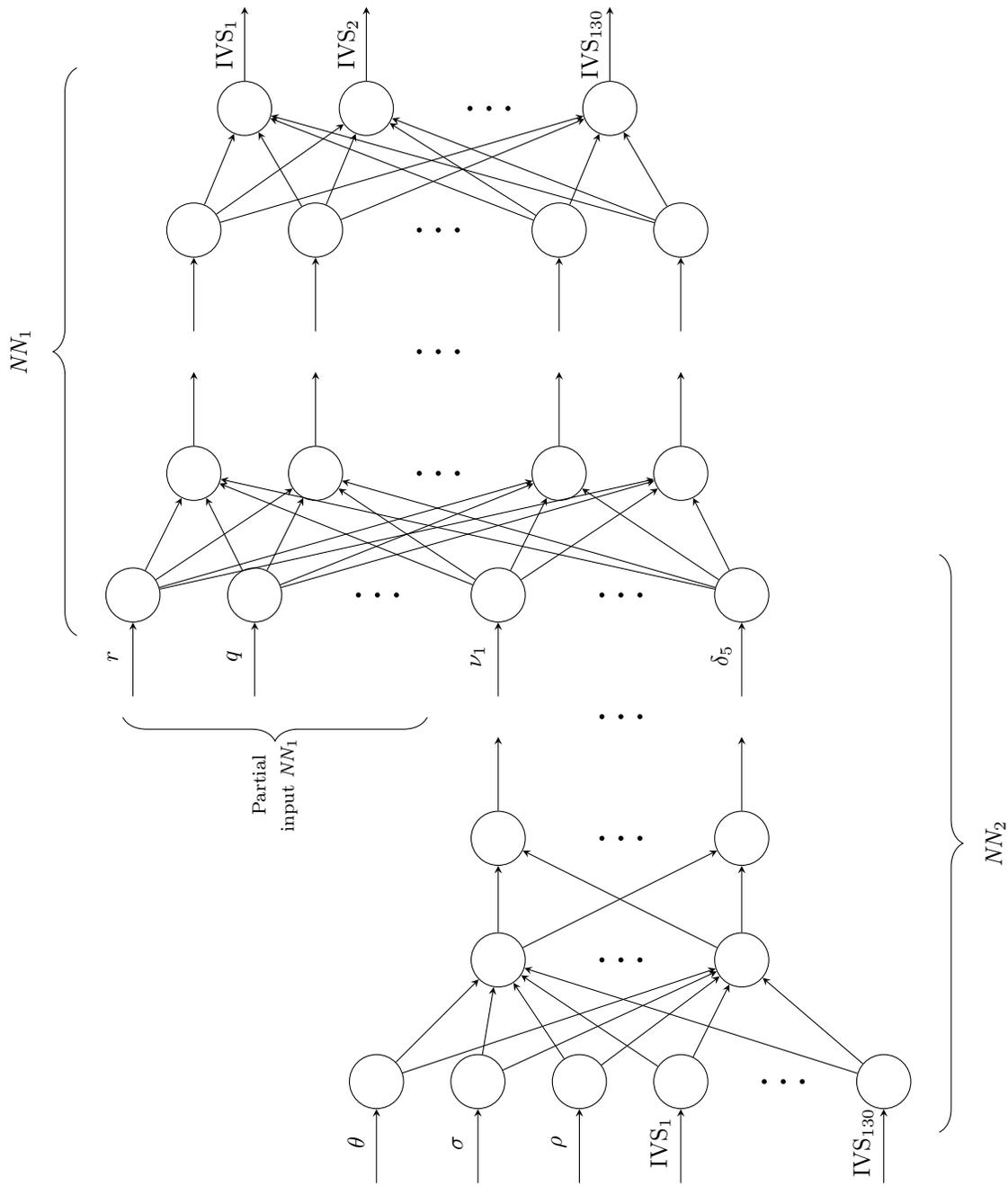
  }

  \subsection{A side result: moving IVS}

  Having a concrete numerical tool that allows the
  recalibration of our model online and basically in an
  instantaneous way has another ``cheerful'' consequence.
  Let us imagine, for the moment, that the dynamics of the
  parameters $p$ are known.  If this is the case, then we
  can model for indefinite time the evolution of an
  implied volatility surface without breaking any
  arbitrage constraints, neither static nor dynamic ones.
  To our knowledge, it is the first time that this
  achieved in an efficient way, one impressive other
  implementation has been presented in
  \cite{CarMaNad:2017}. The algorithm used to accomplish
  that is outlined in Algorithm~\ref{alg:CRCM}.

  \begin{algorithm}
\caption{}\label{alg:CRCM}
\begin{algorithmic}[1]
      \State Pick initial values for the state variables
      (return of asset $X$ and variance $V$), the Heston
      parameters ($\theta, \sigma,\rho$ and $k$), the
      jump-frequency
      $\sim \operatorname{Pois} \left({\lambda\,
          dt}\right)$ and the jump-size normal
      distribution
      $\sim \mathcal{N}(\nu_{i}, \delta_{i}^2)$ for
      $i=1,\dots, 5$. Parameters $\lambda$ and $\kappa$
      remain fixed throughout the procedure.  \State
      Compute the implied volatility surface (IVS) given
      the initial values.  \State \textbf{Bates step}:
      update the two state variables $X$ and $V$ in
      $X_{\text{new}}$ and $V_{\text{new}}$.  \State
      Compute the new implied volatility surface
      IVS\textsubscript{new} given $X_{\text{new}}$ and
      $V_{\text{new}}$.  \State \textbf{Heston-parameter
        step}: update the three parameters $\theta$,
      $\sigma$, $\rho$ according to an exogenously given
      dynamics and obtain $\theta_{\text{new}}$,
      $\sigma_{\text{new}}$, $\rho_{\text{new}}$.  \State
      Given IVS\textsubscript{new} together with
      $\theta_{\text{new}}$, $\sigma_{\text{new}}$,
      $\rho_{\text{new}}$, compute the new parameters
      $(\nu_i^{\text{new}},
      \delta_i^{\text{new}})_{i=1}^5$ such that the IVS
      obtained with $X_{\text{new}}$, $V_{\text{new}}$,
      $\rho_{\text{new}}$, $\theta_{\text{new}}$,
      $\sigma_{\text{new}}$, $\lambda$, $k$ and
      $(\nu_i^{\text{new}},
      \delta_i^{\text{new}})_{i=1}^5$ remains constant
      (equal to IVS\textsubscript{new}).  \State Overwrite
      the initial parameters with the new parameters
      (having the sub/super-script \emph{new}).  \State
      Restart from point 3.
      % \EndProcedure
\end{algorithmic}
  \end{algorithm}

  \bigskip

  As already written in the algorithm and for our
  purposes, we decided to initially pick randomly the
  parameters $\theta$, $\sigma$, $\rho$, but then letting
  them evolve according to very simple dynamics, namely
  adding some noise to the current value to get the new
  one.  The variance of the Gaussian noise has been chosen
  relatively small and values are scaled if they overcome
  a certain threshold, so that the relative change (with
  respect to the initial value) could not exceed 5\%.  In
  addition, we made sure that the Feller condition was
  always satisfied and that the values could not exit the
  natural domains we assigned them.  For example, if the
  correlation $\rho$ were brought outside of the interval
  $[-1,1]$, then we would force it to remain inside by
  collapsing the value to the closest extreme. For both
  $\theta$ and $\sigma$ the interval $[0.01, 0.5]$ was
  chosen.

  Notice also that Steps 2, 4 and 6 of
  Algorithm~\ref{alg:CRCM} are made by neural networks,
  $\nn_1$ for 2 and 4, while $\nn_2$ for Step 6.

  Solving the same problem with the desired precision
  without neural networks would have required an immense
  computational power, since the inverse problem is
  notably ill-posed and the regularised inverse problem
  has to be solved at any point in time along the
  discretisation grid.  This is something possible on a
  standard laptop only through these techniques.  Finally,
  it is important to underline that we do not break any
  arbitrage condition because the CNKK drift condition is
  fully incorporated in the steps of
  Algorithm~\ref{alg:CRCM}.

\section{Conclusion}

In this paper, we tried to set up a new rigorous framework in
continuous time for the dynamics of volatility surfaces (or
cubes, etc\dots), so called consistent recalibration models
with applications.  To do so, we took inspiration from similar
work in discrete time by Richter and Teichmann
\cite{Richter2017} and another paper by Harms \emph{et al.}
\cite{paperYieldCurve2018}, which builds the theory in
continuous time, but focusing on yield curves modelling.  With
respect to the latter, in our case we have a more complex
setting due to the more complex term structure, which is here
enriched with a ``strike'' dimension.  This is reflected in
what we called CNKK equation, a generalisation of the more
popular HJM equation, but with considerably more involved
drift term.  It goes without saying that this made the
equation intractable from an analytical point of view.

To overcome this issue, we decided to represent the drift term
by neural networks. We therefore proposed a new way of solving
the (ill-posed) calibration problem, by exploiting the fact
that composition of neural networks is still a neural network
and by defining, in this sense, a sort of \emph{inverse}
network applying implicit regularisation.  The same trick
could be used for other applications, also in branches other
than mathematical finance, to solve inverse problems.  The use
of neural networks was crucial to make numerical procedures
tractable and to get information on the solution of the CNKK
SPDE.  In this case, we can say that the neural network helped
us solving an equation which we could not even write down
(explicitly).

Finally, we could use the same inverse neural network to
simulate the evolution in time of an implied volatility
surface, in this case generated by a generalised Bates model.
To the best of our knowledge, it is the first time this can be
achieved for indefinite time without breaking arbitrage
constraints. In this way, we are even implicitly building 
a realistic equity option market simulator capable of avoiding
any form of arbitrage.

\appendix
\section{Graphical results}\label{sec:plotsNNs}

In this appendix, we report some of the pictures produced using the model outlined in Section \ref{sec:CRC_math} with Python
and the graphical package Matplotlib\footnote{J. D. Hunter, "Matplotlib: A 2D Graphics Environment", Computing in Science \& Engineering, vol. 9, no. 3, pp. 90-95, 2007}.\\
In the first case, plotted figures represent a 3D representation of implied
volatility surfaces together with a heat-map reporting the (pointwise)
differences between the original implied volatility surface (Original IVS) and the
one obtained by application of neural network $\nn_1$ (New IVS), which takes
parameters in input as return the IVS on a grid given by 13 moneyness (or
strikes) and 10 maturities. 
The heat-map was produced using the command \textsf{pcolormesh}.
All (input) parameters were randomly generated from a uniform distribution,
the same used for the generation of training data (but of course not used
during the training process). For sake of simplicity, here we calibrated
just one couple $(\nu, \delta)$ for each volatility surface.

\begin{figure}[H]
  \centering
  \includegraphics[width=1.0\textwidth]{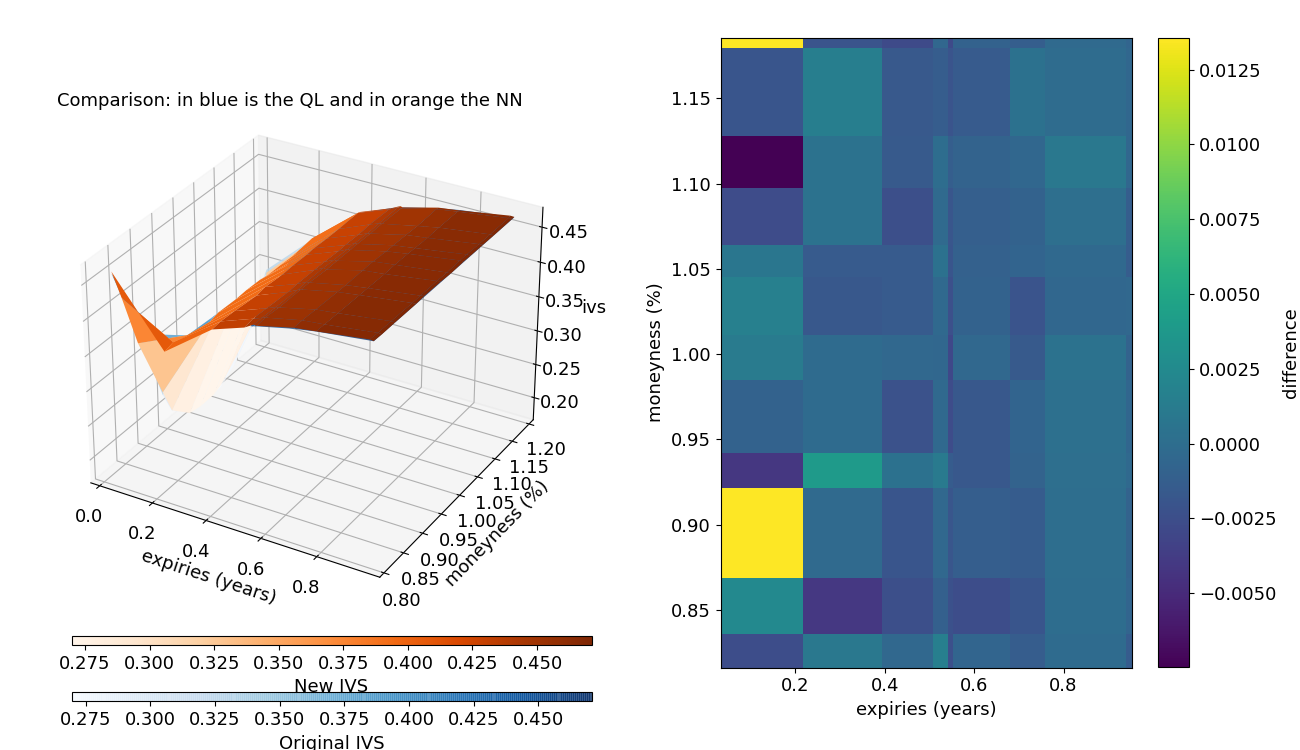}
  \caption{\small Parameters: $S_0~=~$100, $r$~=~0.0205, $q$~=~0.03, $V_0$~=~0.0001, $\kappa$~=~7.797, $\theta$~=~0.247, $\sigma$~=~0.280, $\rho$~=~0.042, $\lambda$~=~0.081, $\nu$~=~0.159, $\delta$~=~0.205}
  %$\nu$~=~[0.159, -0.288, -0.219, -0.230, -0.114], $\delta$~=~[0.205, 0.147, 0.247, 0.094, 0.223]}
  \label{fig:IVS1}
\end{figure}

\begin{figure}[H]
  \centering
  \includegraphics[width=1.0\textwidth]{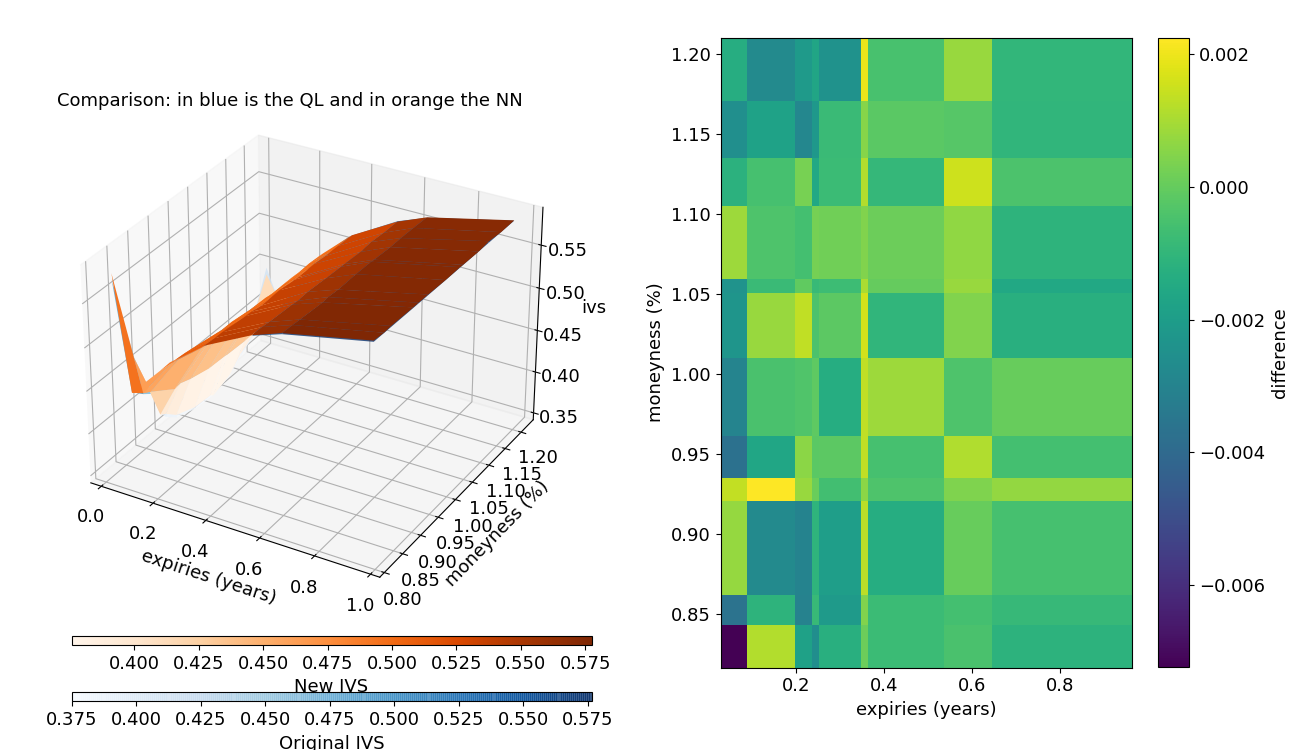}
  \caption{\small Parameters: $S_0$~=~100, $r$~=~0.0068, $q$~=~0.0161, $V_0$~=~0.0951, $\kappa$~=~5.421, $\theta$~=~0.370, $\sigma$~=~0.224, $\rho$~=~0.242, $\lambda$~=~0.289, $\nu$~=~0.087, $\delta$~=~0.249}
  %$\nu$~=~[0.087, 0.294, 0.192, -0.052, 0.226], $\delta$~=~[0.249, 0.026, 0.218, 0.243, 0.224]}
  \label{fig:IVS2}
\end{figure}

\begin{figure}[H]
  \centering
  \includegraphics[width=1.0\textwidth]{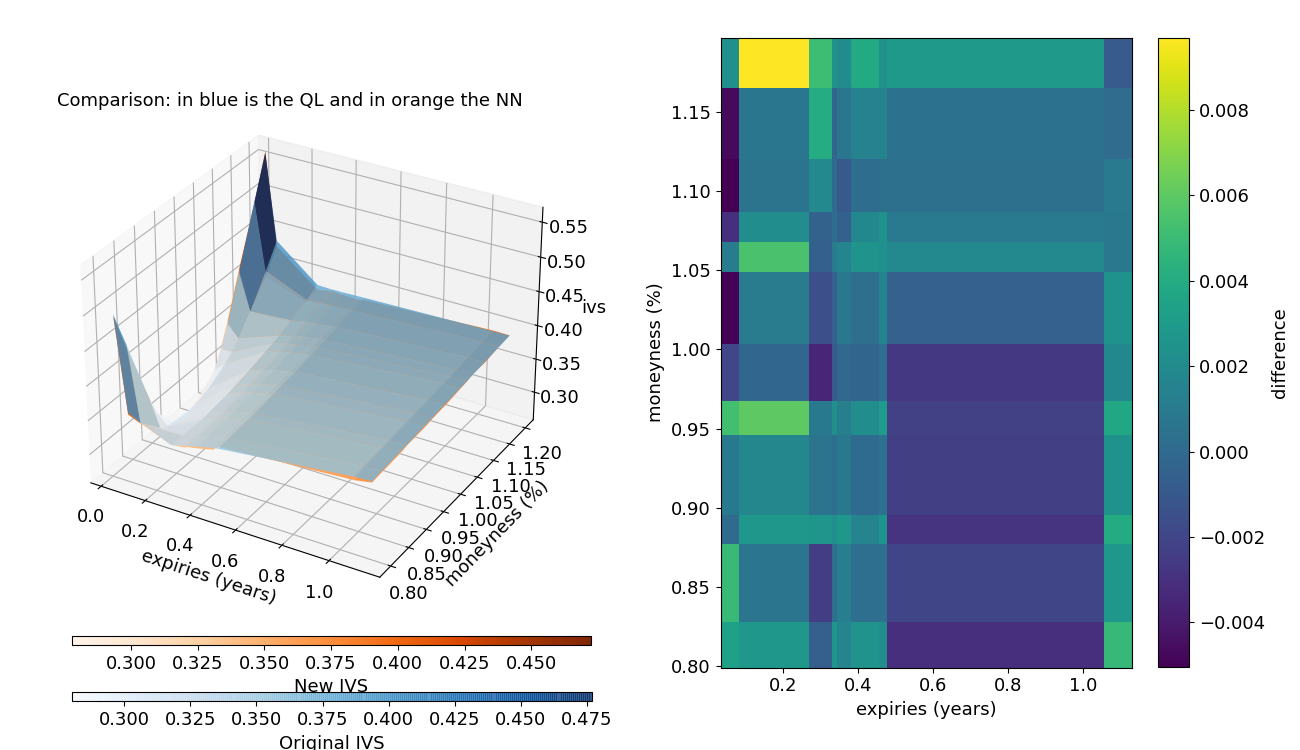}
  \caption{\small Parameters: $S_0$~=~100, $r$ = 0.0111, $q$ = 0.0021, $V_0$~=~0.0552$, \kappa$~=~8.698, $\theta$~=~0.106, $\sigma$~=~0.391, $\rho$~=~-0.12, $\lambda$~=~0.491, $\nu$~=~-0.202, $\delta$~=~0.287}
  %$\nu$~=~[-0.202, 0.058, -0.295, -0.068, -0.274], $\delta$~=~[0.287, 0.136, 0.285, 0.238, 0.261]}
  \label{fig:IVS3}
\end{figure}

\newpage

On the other hand, in the next figures we will take into consideration
the neural network we called in Section \ref{sec:deepCal} $\nn_2$.
Figures \ref{fig:IVS1_NN2}, \ref{fig:IVS2_NN2} and \ref{fig:IVS3_NN2} were basically obtained after one loop of Algorithm \ref{alg:CRCM}, in the sense that we started from an IVS generated by a model in which the price process and the variance process evolved in time (step 3 in the algorithm), we let the 3 parameters $\theta$, $\sigma$ and $\rho$ changing according to exogenous dynamics (essentially, normal noise) and then we exploited $\nn_2$ to recover the jump parameters that would give rise to the same IVS (IVS\textsubscript{new} in the same algorithm).
Note that the `Original IVS' (in the plots) are obtained for Figures \ref{fig:IVS1_NN2} and \ref{fig:IVS2_NN2} in an analytical way, while in Figure \ref{fig:IVS3_NN2} the `Original IVS' is the output of $\nn_1$. The fact that the error is zero everywhere, although the starting and derived (by $\nn_2$) parameters are not the same, means that the inversion of the neural network is indeed effective.

\bigskip
\bigskip

\begin{figure}[H]
  \centering
  \includegraphics[width=1.0\textwidth]{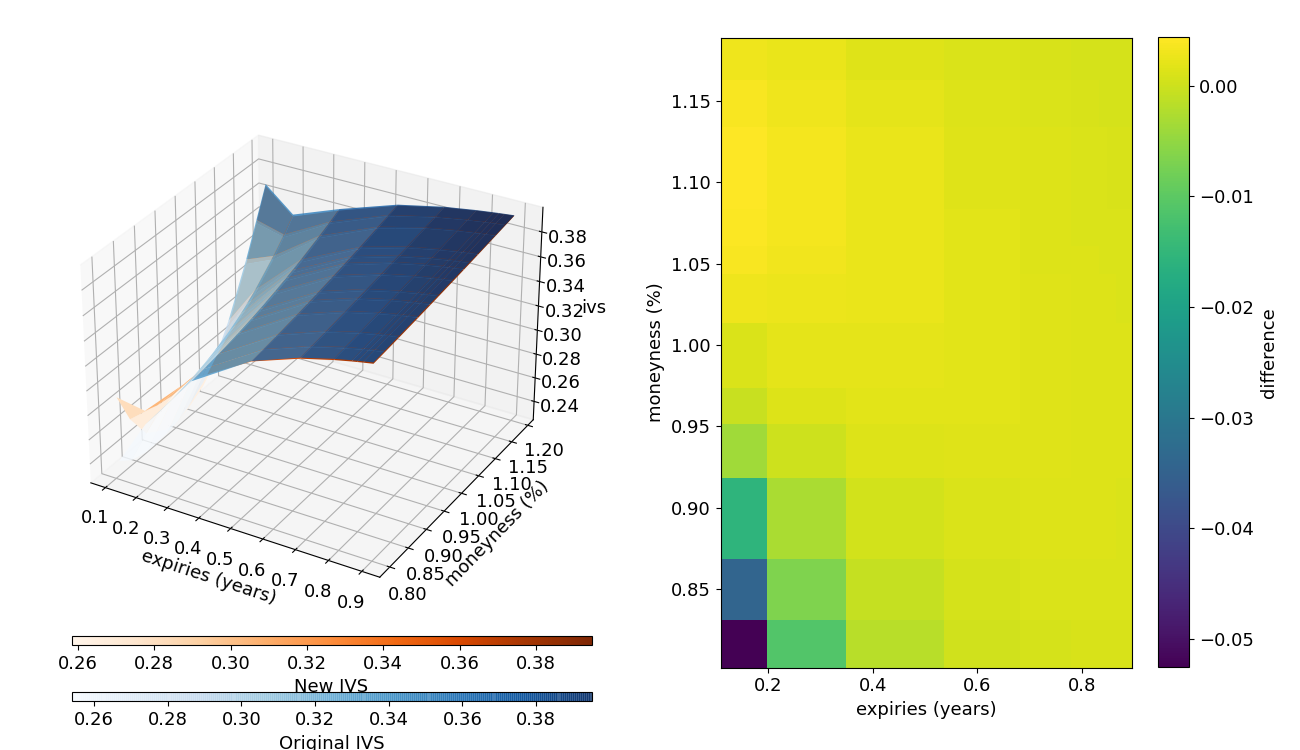}
  \caption{\small Parameters: $S_0$~=~99.783, $r$ = 0.0521, $q$ = 0.0082, $V_0$~=~0.0037, $\kappa$~=~6.924, $\theta$~=~0.146, $\sigma$~=~0.328, $\rho$~=~-0.08, $\lambda$~=~0.295,
  $\nu$~=~-0.286, $\delta$~=~0.211;
 %$\nu$~=~[-0.286, 0.035, -0.144, -0.051, -0.130], $\delta$~=~[0.211, 0.138, 0.055, 0.168, 0.236]; 
 after 1 loop and using $\nn_2$ $\theta$~=~0.142, $\sigma$~=~0.339, $\rho$~=~-0.08, $\nu$~=~-0.238, $\delta$~=~0.299}
 %$\nu$~=~[-0.238, 0.196, 0.256, 0.274, 0.185], $\delta$~=~[0.299, 0.084, 0.255, 0.108, 0.037]}
  \label{fig:IVS1_NN2}
\end{figure}

\begin{figure}[H]
  \centering
  \includegraphics[width=0.98\textwidth]{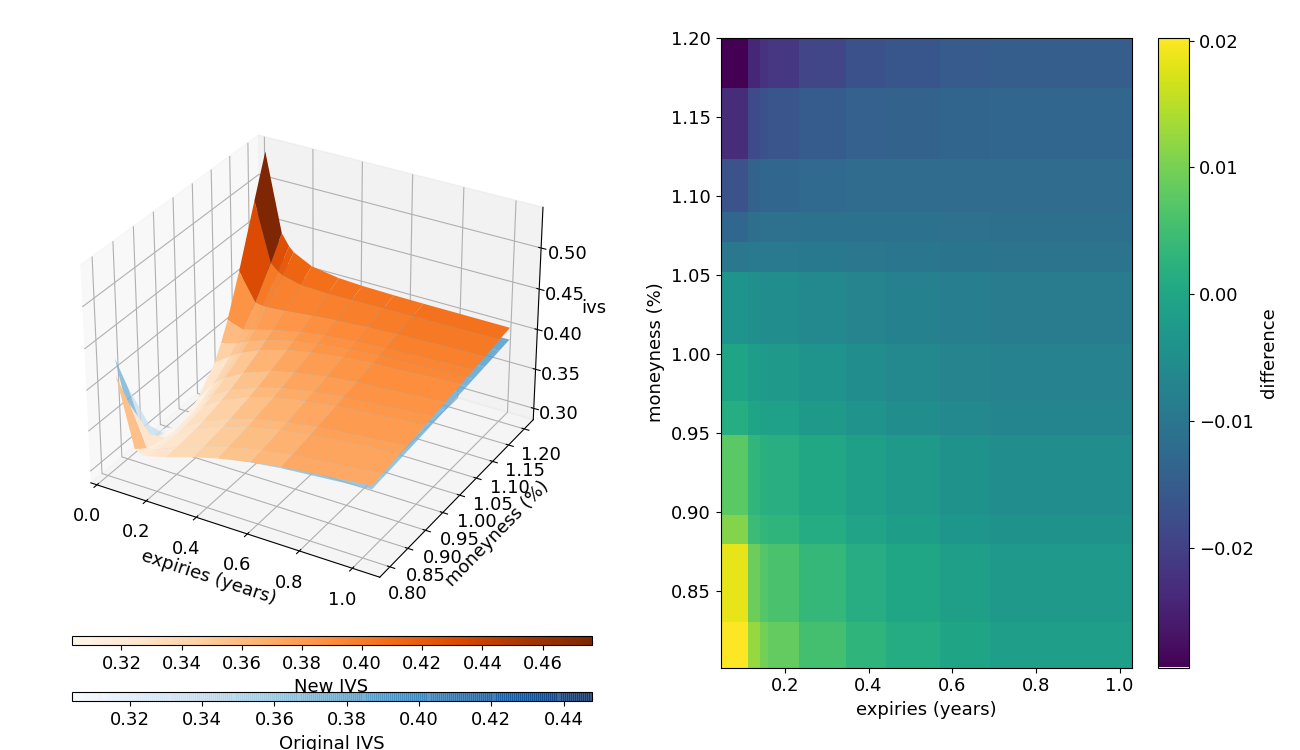}
  \caption{\small Parameters: $S_0$~=~100.26, $r$ = 0.0111, $q$ = 0.0021, $V_0$~=~0.0663, $\kappa$~=~8.698, $\theta$~=~0.106, $\sigma$~=~0.391, $\rho$~=~-0.12, $\lambda$~=~0.491, $\nu$~=~-0.202, $\delta$~=~0.287;
 %$\nu$~=~[-0.202, 0.058, -0.295, -0.068, -0.274], $\delta$~=~[0.287, 0.136, 0.285, 0.238, 0.261]; 
 after 1 loop and using $\nn_2$ $\theta$~=~0.102, $\sigma$~=~0.408, $\rho$~=~-0.12, $\nu$~=~-0.279,  $\delta$~=~0.300.}
 %$\nu$~=~[-0.279, 0.263, 0.165, 0.290, 0.241], $\delta$~=~[0.300, 0.085, 0.281, 0.016, 0.017].
%\newline Note that this picture has the same parameters, if we exclude $S_0$, of picture \ref{fig:IVS3}.
%}
  \label{fig:IVS2_NN2}
\end{figure}

\begin{figure}[H]
  \centering
  \includegraphics[width=0.98\textwidth]{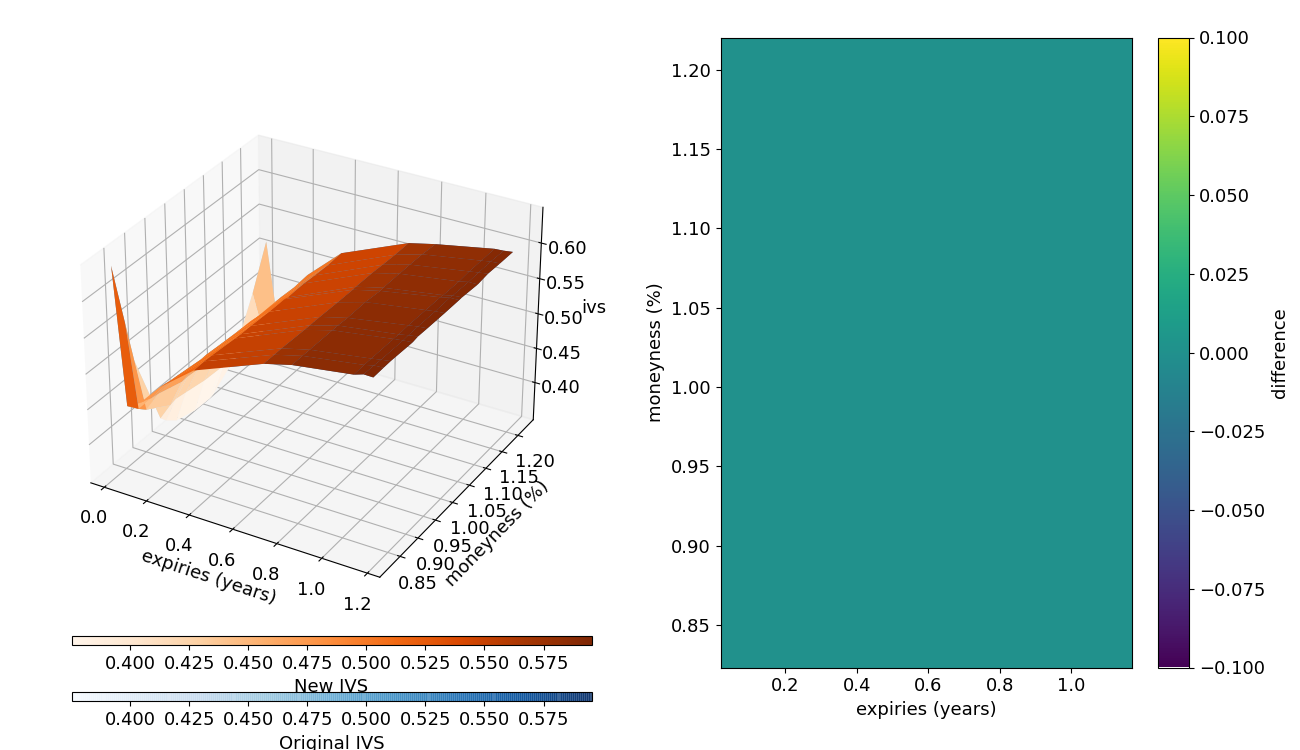}
  \caption{\small Parameters: $S_0$~=~100.83, $r$ = 0.0068, $q$ = 0.0161, $V_0$~=~0.1046, $\kappa$~=~5.421, $\theta$~=~0.370, $\sigma$~=~0.224, $\rho$~=~0.242, $\lambda$~=~0.289, $\nu$~=~0.087, $\delta$~=~0.249;
  %$\nu$~=~[0.087, 0.294, 0.192, -0.052, 0.226], $\delta$~=~[0.249, 0.026, 0.218, 0.243, 0.224]; 
  after 1 loop and using $\nn_2$ $\theta$~=~0.357, $\sigma$~=~0.218, $\rho$~=~0.251, $\nu$~=~0.289, $\delta$~=~0.300.}
  %$\nu =$[0.289, 0.161, -0.260, -0.055, 0.092], $\delta$~=~[0.300, 0.248, 0.175, 0.135, 0.100]}
  \label{fig:IVS3_NN2}
\end{figure}

\end{document}